\documentclass[11pt, twoside, letter]{article}
\usepackage[margin = 1in]{geometry}
\usepackage[utf8]{inputenc}
\usepackage{amsmath, amssymb, amsfonts}
\usepackage{ifthen}
\usepackage[boxed, linesnumbered]{algorithm2e}
\usepackage{comment}

\usepackage{pgf}
\usepackage{tikz}
\usetikzlibrary{arrows,automata}

\usepackage{microtype}

\usepackage{amsthm}
\usepackage[unicode,psdextra,colorlinks=true,citecolor=blue,bookmarksnumbered=true,hyperfootnotes=false,linkcolor=green!40!black]{hyperref}
\usepackage{multirow, multicol}

\usepackage{subfigure, graphicx} 
 \usepackage{bmpsize}
\usetikzlibrary{arrows,shapes}

\tikzset{circle/.pic={
\node[circle, aspect=1, draw, minimum size=0.3cm, text width=0.2cm] () at (0,0) {\tikzpictext};
}} 
\usepackage{microtype}
\usepackage{pgfplots}
\pgfplotsset{compat=1.10}

\newtheorem{theorem}{Theorem}[section]
\newtheorem{corollary}[theorem]{Corollary}
\newtheorem{lemma}[theorem]{Lemma}

\newtheorem{definition}[theorem]{Definition}
\newtheorem{remark}[theorem]{Remark}

\newcommand{\defeq}{\stackrel{\textup{def}}{=}}

\newcommand{\ie}{{\it i.e., }}

\renewcommand{\vec}[1]{\mathbf{#1}}

\newcommand{\etal}{{\em et al.~}}






\begin{document}
\title{Multiplicative Weights Update with Constant Step-Size in Congestion Games: Convergence, Limit Cycles and Chaos}


\author{Gerasimos Palaiopanos\\SUTD\\gerasimosath@yahoo.com
\and Ioannis Panageas\\MIT \& SUTD\\ioannisp@mit.edu
\and Georgios Piliouras\\SUTD\\georgios@sutd.edu.sg}


\date{}
\maketitle
\begin{abstract}
The Multiplicative Weights Update (MWU) method is a ubiquitous meta-algorithm that works as follows:
 A distribution is maintained on a certain 
  set, and at each step the probability assigned to element 
   $\gamma$ is multiplied
 by $(1 -\epsilon C(\gamma))>0$ where $C(\gamma)$ is the ``cost" of element 
 $\gamma$ and then rescaled to ensure that the new values form a distribution.
We analyze MWU in congestion games where agents use \textit{arbitrary admissible
constants} as 
 learning rates $\epsilon$
 and prove convergence to \textit{exact Nash equilibria}.
Our proof leverages a novel connection between MWU and the Baum-Welch algorithm, the standard
instantiation of the Expectation-Maximization (EM) algorithm for hidden Markov models (HMM).
 Interestingly, this convergence result does not carry over to the nearly homologous MWU variant where at each step the probability assigned to element $\gamma$ 
 is multiplied by $(1 -\epsilon)^{C(\gamma)}$ even for the most innocuous case of two-agent, two-strategy load balancing games, where such dynamics can provably lead to limit cycles
 or even chaotic behavior.
\end{abstract}




\section{Introduction}

The Multiplicative Weights Update (MWU) is a ubiquitous meta-algorithm with numerous applications in different fields~\cite{Arora05themultiplicative}.
It is particularly useful in game theory due to its regret-minimizing properties \cite{Fudenberg98,Cesa06}.
It is typically introduced in two nearly identical variants, the one in which at each step the probability assigned to action $\gamma$
is multiplied by $(1-\epsilon C(\gamma))$  and the one in which it is multiplied by $(1-\epsilon)^{C(\gamma)}$ where $C(\gamma)$ is the cost of action $\gamma$. We will refer to the first as the linear variant, $\text{MWU}_\ell$, and the second  as the exponential, $\text{MWU}_e$. In the literature there is little distinction between these two variants as both carry the same advantageous regret-minimizing property. 
It is also well known that in order to achieve sublinear regret, the learning rate $\epsilon$ must be decreasing as time progresses.
This  constraint raises a natural question: Are there interesting classes of games where MWU behaves well without the need to  fine-tune its learning rate?

A natural setting to test the learning behavior of MWU with constant learning rates $\epsilon$ is the class of congestion games.
Unfortunately, even for the most innocuous instances of congestion  games $\text{MWU}_e$ fails to converge to equilibria.
For example, even in the simplest case of two balls two bins games,\footnote{$n$ balls $n$ bin games are symmetric load balancing games with $n$ agent and $n$ edges/elements each with a cost function of c(x)=x. We normalize costs equal to $c(x)=x/n$ so that they lie in $[0,1]$.} $\text{MWU}_e$ with $\epsilon=1-e^{-10}$ is shown to converge to a limit cycle of period $2$ for
infinitely many initial conditions (
Theorem \ref{thm:cycle}). If the cost functions of the two edges are not identical then we create instances of two player load balancing games such that $\text{MWU}_e $ has periodic orbits of length $k$ for all $k>0$, as well as uncountable many initial conditions which never settle on any periodic orbit but instead exhibit an irregular behavior known as Li-Yorke chaos
(
Corollary \ref{coro:chaos}).

The source of these problems is exactly the large, fixed learning rate $\epsilon$, \textit{e.g.}, $\epsilon\approx 1$ for costs in $[0,1]$.
Intuitively, the key aspect of the problem can be captured by (simultaneous) best response dynamics. If both agents start from the same edge and best-respond simultaneously they will land on the second edge which now has a load of two. In the next step they will both jump  back to the first edge and this motion will be continued perpetually.  Naturally, $\text{MWU}_e$ dynamics are considerably more intricate as they evolve over mixed strategies and allow for more complicated non-equilibrium behavior but the key insight is correct.
Each agent has the right goal, decrease his own cost and hence the potential of the game, however, as they pursue this goal too aggressively they cancel each other's gains and lead to unpredictable non-converging behavior.

In a sense, the cautionary tales above agree with our intuition. Large, constant learning rates $\epsilon$ nullify the known performance guarantees of MWU. We \textit{should} expect erratic behavior in such cases. The typical way to circumvent these problems is through careful monitoring and possibly successive halving of the $\epsilon$ parameter, a standard technique in the MWU literature. In this paper, we explore an alternative, cleaner, and surprisingly elegant solution to this problem. \textit{We show that applying $\text{MWU}_\ell$, the linear variant of MWU, suffices to guarantee convergence in all congestion games.}



\subsection*{Our contribution.}

Our key result is the proof of convergence of $\text{MWU}_\ell$ in congestion games. The main technical contribution is a proof that
the potential of the mixed state is always strictly decreasing along any nontrivial trajectory (Theorem \ref{mainthm}).
This result holds for all congestion games, irrespective of the number of agents or the size, topology of the strategy sets. Moreover, each agent $i$ may be applying different learning rates $\epsilon_i$. The only restriction on the set of allowable learning rates $\epsilon_i$ is that for each agent the multiplicative factor $(1-\epsilon_i C_i(\vec{s}))$ should be positive for all strategy outcomes $\vec{s}$.\footnote{This is an absolutely minimal restriction so that the denominator of $\text{MWU}_\ell$ cannot become equal to zero.}  Arguing convergence to equilibria for all initial conditions (Theorem \ref{Qincreasing}) and further, convergence to Nash equilibria for all interior initial conditions (Theorem \ref{thm:convergencenash}) follows. Proving that the potential always decreases (Theorem \ref{mainthm}) hinges upon discovering a novel interpretation of MWU dynamics. Specifically, we show that the class of dynamical  systems derived by applying $\text{MWU}_\ell$ in congestion games
is a special case of a convergent class of dynamical systems introduced by Baum and Eagon (Theorem \ref{thm:B-E} \cite{Baum_Eagon}). The most well known member of this class is the classic Baum-Welch algorithm, the standard instantiation of the Expectation-Maximization (EM) algorithm for hidden Markov models (HMM). Effectively, the proof of convergence of both these systems boils down to a proof of membership to the same class of Baum-Eagon systems (see section \ref{sec:BEW} for more details on these connections).

We conclude by providing simple congestion games where $\text{MWU}_e$  fails to converge. The main technical contribution of this section is proving convergence to a limit cycle, specifically a periodic orbit of length two,  for the simplest case of two balls two bins games for
infinitely many initial conditions (Theorem \ref{thm:cycle}). After normalizing costs to lie in $[0,1]$, i.e. $c(x)=x/2$, we prove that almost all symmetric non-equilibrium initial conditions converge to a unique limit cycle when both agents use learning rate  $\epsilon=1-e^{-10}$.
In contrast, since $1-\epsilon \cdot C(\vec{s})\geq1-(1-e^{-10})1=e^{-10}>0$, $\text{MWU}_\ell$ successfully converges to equilibrium. Establishing chaotic behavior for the case of edges with different cost functions is rather straightforward in comparison (Corollary \ref{coro:chaos}).
The key step is to exploit symmetries in the system to reduce it to a single dimensional one and then establish the existence of a periodic orbit of length three. The existence of periodic orbits of any length as well as chaotic orbits then follows from the Li-Yorke theorem \ref{thm:liyorke} \cite{liyorke} (see section \ref{sec:chaos} for background on chaos and dynamical systems).







\section{Related Work}

\textbf{Congestion/potential games:}
Congestion games are amongst the most well known and thoroughly studied class of games. 
Proposed in \cite{rosenthal73} and isomorphic to potential games \cite{potgames}, they have been successfully employed in myriad modeling problems.
The study the price of anarchy, i.e. efficiency guarantees for equilibria, in congestion games is arguably amongst the most developed areas within algorithmic game theory, e.g., 
\cite{KoutsoupiasP99WorstCE,roughgarden2002bad,christodoulou,Fotakis2005226,Schafer10,Roughgarden09}.

 It is common knowledge 
 that better-response dynamics in congestion games converge. In these dynamics, in every round, exactly one agent deviates to a better strategy. If two or more agents move at the same time then convergence is not guaranteed.  
Despite the numerous positive convergence results for concurrent  dynamics in congestion games, e.g., \cite{Fotakis08,Berenbrink:2007:DSL:1350525.1350533,Ackermann:2009:CID:1582716.1582732,hoo,kleinberg2011load,cara,sincla}, we know of no prior work establishing such a clean, \textit{deterministic} convergence result to \textit{exact} Nash equilibria for general atomic congestion games. MWU has also been studied  in congestion games. In \cite{Kleinberg09multiplicativeupdates}  randomized variants of the exponential version of the MWU are shown to converge w.h.p. to pure Nash equilibria as long as the learning rate $\epsilon$ is small enough. In contrast our positive results for linear $MWU_{\ell}$ hold deterministically and for all learning rates. Our paper establishes that these results cannot be extended to the exponential $MWU_{e}$ even for two balls two bin games.

\textbf{Multiplicative Weights Update and connections:}
The multiplicative weights update method is a  widely used meta-algorithm. From the perspective of online learning it belongs to the class of regret minimizing algorithms. As a result it is widely applicable in algorithmic game theory, as the time average behavior of MWU leads to (approximate) coarse correlated equilibria (CCE) for which price of anarchy guarantees apply~\cite{Roughgarden09}. 
In the last couple of years several theoretical results have been proved on the intersection of computer science, learning and evolution  for which MWU was the linking component.  In \cite{ITCS:DBLP:dblp_conf/innovations/ChastainLPV13,PNAS2:Chastain16062014} Chastain et al. show that standard models of haploid evolution can be directly interpreted as MWU dynamics  \cite{Hofbauer98} employed in coordination games. Meir and Parkes \cite{Reshef15}, Mehta et al. \cite{ITCS15MPP} have shed more light on these connections.


\textbf{Non-convergent dynamics:}  Outside the class of congestion games, there exist several negative results in the literature concerning the non-convergence of  MWU and variants thereof. In particular, in \cite{daskalakis10} it was shown that the multiplicative updates algorithm fails to find the unique Nash equilibrium of the $3\times 3$ Shapley game. Similar non-convergent results have been proven for perturbed zero-sum games \cite{Balcan12},
as well as  for the continuous time version of MWU, the replicator dynamics \cite{paperics11,soda14,CRS16}.
The possibility of applying Li-Yorke type arguments for MWU in congestion games with two agents was inspired by a remark in \cite{avramopoulos} for the case of continuum of agents. Our paper is the first to our knowledge where non-convergent MWU behavior in congestion games is formally proven capturing both limit cycles and chaos and we do so in the minimal case of two balls two bin games.

\section{Preliminaries}
\label{Concepts}

\noindent{\bf Notation.}  We use boldface letters, e.g., $\vec{x}$, to denote column vectors (points). For a function $f:\mathbb{R}^m \to \mathbb{R}^m,$ by $f^n$ we denote the composition of $f$ with itself $n$ times, namely $\underbrace{f \circ f \circ \cdots\circ f}_{n \textrm{ times}}$.


\subsection{Congestion Games}
A \emph{congestion game} \cite{rosenthal73} is defined by the tuple $(\mathcal{N}; E;$ $(S_i)_{i \in \mathcal{N}};(c_e)_{e \in E})$ where $\mathcal{N}$ is the set of \emph{agents}, $N = |\mathcal{N}|$, $E$ is a set of \emph{resources} (also known as \emph{edges} or \emph{bins} or \emph{facilities}) and each player $i$ has a set $S_i$ of subsets of $E$ ($S_i \subseteq 2^E$) and $|S_i|\geq 1$. Each strategy $s_i \in S_i$ is a set of edges and $c_e$ is a positive cost (latency) function associated with facility $e$. We use small greek characters like $\gamma, \delta$ to denote different strategies/paths. For a strategy profile $\vec{s} = (s_1,s_2,\dots,s_N)$, the cost of player $i$ is given by $c_i(\vec{s}) = \sum_{e \in s_i} c_e(\ell_e(\vec{s}))$, where $\ell_e(\vec{s})$ is the number of players using $e$ in $\vec{s}$ (the load of edge $e$). 
The potential function is defined to be $\Phi(\vec{s}) = \sum_{e \in E}\sum_{j=1}^{\ell_e(\vec{s})} c_e(j)$.

For each $i \in \mathcal{N}$ and $\gamma \in S_i$,  $p_{i\gamma}$ denotes the probability player $i$ chooses strategy $\gamma$. We denote by $\Delta(S_{i}) = \{\vec{p}\geq \vec{0}: \sum_{\gamma}p_{i\gamma}=1\}$ the set of mixed (randomized) strategies of player $i$ and $\Delta = \times_i \Delta(S_i)$ the set of mixed strategies of all players. We use $c_{i\gamma}=\mathbb{E}_{\vec{s}_{-i}\sim \vec{p}_{-i}} c_i(\gamma,\vec{s}_{-i})$ to denote the expected cost of player $i$ given that he chooses strategy $\gamma$ and $\hat{c}_{i} = \sum_{\delta \in S_i}p_{i\delta}c_{i\delta}$ to denote his expected cost.

\subsection{Dynamical Systems and Chaos}
\label{sec:chaos}

Let $\vec{x}^{(t+1)} = f(\vec{x}^{(t)})$ be a \emph{discrete time} dynamical system with update rule $f:\mathbb{R}^m \to \mathbb{R}^m$.
The point $\vec{z}$ is called a \textit{fixed point} of $f$ if $f(\vec{z}) = \vec{z}$.
A sequence $(f^t(\vec{x}^{(0)}))_{t \in \mathbb{N}}$ is called a \textit{trajectory} or \textit{orbit} of the dynamics with $x^{(0)}$ as starting point.
  A common technique to show that a dynamical system  converges to a fixed point is to construct a function $P : \mathbb{R}^m \to \mathbb{R}$ such that $P(f(\vec{x})) > P(\vec{x})$ unless $\vec{x}$ is a fixed point.
  We call $P$ a \textit{Lyapunov} or \textit{potential} function.

\begin{definition}
$C = \{\vec{z}_1, \ldots, \vec{z}_k\}$ is called a \textit{periodic orbit} of length $k$ if $\vec{z}_{i+1} = f(\vec{z}_i)$ for $1 \leq i \leq k-1$ and $f(\vec{z}_k) = \vec{z}_1$. Each point $\vec{z}_1,\ldots,\vec{z}_k$ is called periodic point of period $k$.  If the dynamics converges to some periodic orbit, we also use the term \textit{limit cycle}.
\end{definition}
Some dynamical systems converge and their behavior can be fully understood and some others have strange, \textit{chaotic} behavior. There are many different definitions for what chaotic behavior and chaos means. In this paper we follow the definition of chaos by Li and Yorke. 
Let us first give the definition of a scrambled set.  Given a dynamical system with update rule $f$, a pair $x$ and $y$ is called ``scrambled" if $\lim_{n\to \infty} \inf |f^n(x) - f^n(y)| =0$ (the trajectories get arbitrarily close) and also  $\lim_{n\to \infty} \sup |f^n(x) - f^n(y)| >0$ (the trajectories move apart). A set $S$ is called ``scrambled" if $\forall x,y\in S$, the pair is ``scrambled".
\begin{definition}[Li and Yorke] A discrete time dynamical system with update rule $f$, $f: X \to X$ continuous on a compact set $X \subset \mathbb{R}$ is called chaotic if (a) for each $k \in \mathbb{Z}^{+}$, there exists a periodic point $p \in X$ of period $k$ and (b) there is an uncountably infinite set $S\subseteq{X}$ that is ``scrambled".
\end{definition}
Li and Yorke proved the following theorem \cite{liyorke} (there is another theorem of similar flavor due to Sharkovskii \cite{sharkovskii}):
\begin{theorem}[Period three implies chaos]\label{thm:liyorke} Let $J$ be an interval and let $F : J \to J$ be continuous. Assume there is a point $a \in J$ for which the points $b = F(a), c = F^2(a)$ and $d = F^3(a)$, satisfy $$d \leq a < b<c \textrm{ (or }d\geq a>b>c\textrm{).}$$
Then
\begin{enumerate} 
\item For every $k=1,2,\dots$ there is a periodic point in $J$ having period $k$.
\item There is an uncountable set $S \subset J$ (containing no periodic points), which satisfies the following conditions:
\begin{itemize}
\item For every $p,q \in S$ with $p \neq q$, $$\lim_{n \to \infty} \sup |F^n(p) - F^n(q)|>0 \textrm{ and } \lim_{n \to \infty} \inf |F^n(p) - F^n(q)|=0.$$
\item For every point $p \in S$ and periodic point $q \in J$, $$\lim_{n\to \infty} \sup |F^n(p) - F^n(q)|>0.$$
\end{itemize}
\end{enumerate}
Notice that if there is a periodic point with period $3$, then the hypothesis of the theorem will be satisfied.
\end{theorem}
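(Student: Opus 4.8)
The plan is to prove the statement by the classical ``interval covering'' method for continuous one-dimensional maps. By replacing $F$ with the conjugate $\phi\circ F\circ\phi^{-1}$ under the order-reversal $\phi(x)=-x$ if necessary, I may assume the first alternative $d\le a<b<c$ (periodic points and the scrambled-set condition transfer under conjugacy). Write $K=[a,b]$ and $L=[b,c]$. The starting point is a pair of \emph{covering relations}: since $F$ is continuous with $F(a)=b$, $F(b)=c$, $F(c)=d$, the intermediate value theorem gives $F(K)\supseteq[b,c]=L$ and $F(L)\supseteq[d,c]\supseteq[a,c]=K\cup L$; in particular $L\subseteq F(L)$, $K\subseteq F(L)$ and $L\subseteq F(K)$.

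Next I would isolate two elementary lemmas. First, if $I$ is a compact interval with $I\subseteq F(I)$, then $F$ has a fixed point in $I$ (apply the intermediate value theorem to $F(x)-x$ at the two points of $I$ whose images are the endpoints of $I$). Second, the \emph{pullback lemma}: if $I,J'$ are compact intervals with $J'\subseteq F(I)$, there is a compact subinterval $Q\subseteq I$ with $F(Q)=J'$. Iterating the pullback lemma along any finite admissible chain $A_0\to A_1\to\cdots\to A_n$ (meaning $A_{j+1}\subseteq F(A_j)$, each $A_j\in\{K,L\}$) produces a compact interval $Q\subseteq A_0$ with $F^j(Q)\subseteq A_j$ for $j\le n$ and $F^n(Q)=A_n$.

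For part (1), fix $k\ge1$ and apply iterated pullback to the length-$k$ admissible loop $L\to L\to\cdots\to L\to K\to L$ (with $L$ repeated $k-1$ times, then $K$, then back to $L$), which is admissible by the covering relations. This yields a compact $Q\subseteq L$ with $F^k(Q)\supseteq L\supseteq Q$, so by the first lemma $F^k$ has a fixed point $x\in Q$. Its orbit visits $L$ at times $0,\dots,k-2$ and $K$ at time $k-1$ (mod $k$); since $K\cap L=\{b\}$, a period properly dividing $k$ would force $F^{k-1}(x)=b$, hence $x=F^k(x)=F(b)=c$ and $F(x)=d<b$, contradicting $F(x)\in L$ (the case $k\le 2$ is immediate since $F(b)=c\ne b$). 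Thus $x$ has exact period $k$; for $k=1$ the fixed point in $L$ comes directly from $L\subseteq F(L)$.

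For part (2) I would carry out the Li--Yorke construction of the scrambled set. To each infinite admissible itinerary $\underline A=(A_0,A_1,\dots)$ over $\{K,L\}$ one associates, via nested pullback and the finite-intersection property of nonempty compact sets, a nonempty compact set $P(\underline A)=\{x:F^n(x)\in A_n\ \forall n\}$, and one picks $x_{\underline A}\in P(\underline A)$. One then exhibits an \emph{uncountable} family of itineraries that are ``mostly $L$'' but return to $K$ on schedules chosen so that any two distinct members (i) share arbitrarily long common blocks of $L$'s at aligned times, forcing the two orbits into a common, arbitrarily deeply pulled-back subinterval and hence $\liminf_n|F^n(x_{\underline A})-F^n(x_{\underline A'})|=0$, while (ii) differ at infinitely many positions that can be refined, using a finer partition of $K$ and $L$ into disjoint closed pieces bounded apart, so that at those times one orbit lies in $[a,(a+b)/2]$ and the other in $[(b+c)/2,c]$, giving $\limsup_n|F^n(x_{\underline A})-F^n(x_{\underline A'})|>0$; comparing any $x_{\underline A}$ against the (eventually) periodic itinerary of a periodic point yields the $\limsup>0$ condition against periodic points, and since distinct itineraries yield distinct non-periodic points, $S=\{x_{\underline A}\}$ is uncountable and contains no periodic point. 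I expect step (2) --- specifically arranging the refined return schedules so the ``closeness'' and ``separation'' requirements hold simultaneously for \emph{every} pair while keeping the index set uncountable --- to be the main obstacle; the covering relations and part (1) are routine once the two lemmas are in place.
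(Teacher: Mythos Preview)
The paper does not prove this theorem at all: it is stated in the preliminaries as a classical result due to Li and Yorke, with a citation to \cite{liyorke}, and is then invoked as a black box in Corollary~\ref{coro:chaos}. So there is no ``paper's own proof'' to compare against.

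That said, your proposal is a faithful outline of the original Li--Yorke argument: the covering relations $F(K)\supseteq L$ and $F(L)\supseteq K\cup L$, the pullback lemma producing nested compact intervals following prescribed itineraries, the loop $L^{k-1}KL$ to manufacture period-$k$ points, and the uncountable family of itineraries for the scrambled set. One small slip: in your period-$k$ argument you write ``$F(x)=d<b$'' but $d\le a<b$, so $d$ need not be strictly less than $b$ via $d<b$ alone---the contradiction is still fine since $d\le a<b$ forces $F(x)\notin L=[b,c]$ unless $d=a=b$, which is excluded. Also, your caveat ``the case $k\le 2$ is immediate since $F(b)=c\ne b$'' handles $k=2$ only partially (you need a period-2 point, not just that $b$ has period $>1$); the standard fix is to run the itinerary $KL$ and check the resulting $F^2$-fixed point is not the fixed point in $L$. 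Your honest flag that arranging the simultaneous closeness/separation schedules in part~(2) is the delicate step is accurate; Li and Yorke handle it by indexing itineraries by $r\in(0,1)$ via the density of $K$-visits among squared times, which you could cite or reproduce.
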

\subsection{Baum-Eagon Inequality, Baum-Welch and EM}
\label{sec:BEW}
We start this subsection by stating the Baum-Eagon inequality. This inequality will be used to show that $\textrm{MWU}_{\ell}$ converges to fixed points and more specifically Nash equilibria for congestion games.
\begin{theorem}[Baum-Eagon inequality \cite{Baum_Eagon}]
\label{thm:B-E}
  Let $P(\vec{x})=P\left(\left\{x_{ij}\right\}\right)$ be a polynomial with nonnegative coefficients homogeneous of degree $d$ in its variables $\left\{x_{ij}\right\}$. Let $\vec{x}=\left\{x_{ij}\right\}$ be any point of the domain $D: x_{ij} \geq 0, \sum_{j=1}^{q_i}x_{ij}=1, i=1,2, ..., p, j=1,2, ..., q_i$. For $\vec{x}=\left\{x_{ij}\right\} \in D$ let $\Im(\vec{x})=\Im\left\{x_{ij}\right\}$ denote the point of $D$ whose i, j coordinate is \\
  \begin{equation*}
      \Im(\vec{x})_{ij} = \left( \left. x_{ij} \frac{\partial P}{\partial x_{ij}}\right|_{(\vec{x})} \right) \left/ \sum_{j'=1}^{q_i} x_{ij'} \left. \frac{\partial P}{\partial x_{ij'}}\right|_{(\vec{x})} \right.
  \end{equation*}
  Then $P(\Im(\vec{x}))>P(\vec{x})$ unless $\Im(\vec{x})=\vec{x}$.
  \label{Thm_Baum_Eagon}
\end{theorem}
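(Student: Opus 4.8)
The plan is to deduce the whole statement from two classical facts: concavity of the logarithm (Jensen's inequality, equivalently weighted AM--GM) and nonnegativity of relative entropy (Gibbs' inequality). I will work at a point $\vec{x}\in D$ for which $\Im(\vec{x})$ is actually defined, i.e.\ $S_i(\vec{x}) := \sum_{j'} x_{ij'}\left.\frac{\partial P}{\partial x_{ij'}}\right|_{\vec{x}} > 0$ for every $i$; since $P$ has nonnegative coefficients this already forces $P(\vec{x})>0$, and the degenerate boundary points where some $S_i$ vanishes are set aside for a short separate remark.

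First I would expand $P(\vec{x}) = \sum_{\vec{r}} c_{\vec{r}}\,\vec{x}^{\vec{r}}$ into monomials, $\vec{x}^{\vec{r}} = \prod_{i,j} x_{ij}^{r_{ij}}$ with all $c_{\vec{r}}\ge 0$. A one-line computation gives $x_{ij}\left.\frac{\partial P}{\partial x_{ij}}\right|_{\vec{x}} = \sum_{\vec{r}} c_{\vec{r}}\,r_{ij}\,\vec{x}^{\vec{r}}$, so the numbers $w_{\vec{r}} := c_{\vec{r}}\vec{x}^{\vec{r}}/P(\vec{x})$ form a probability distribution over multi-indices with $\sum_{\vec{r}} w_{\vec{r}}\,r_{ij} = a_{ij}$, where $a_{ij} := x_{ij}\left.\frac{\partial P}{\partial x_{ij}}\right|_{\vec{x}}/P(\vec{x})$; writing $b_i := \sum_j a_{ij} = S_i(\vec{x})/P(\vec{x}) > 0$, the update is simply $\Im(\vec{x})_{ij} = a_{ij}/b_i$.

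The core estimate is the following: for any $\vec{y}\in D$, writing $P(\vec{y})/P(\vec{x}) = \sum_{\vec{r}} w_{\vec{r}}\prod_{i,j}(y_{ij}/x_{ij})^{r_{ij}}$ and applying Jensen to $\log$ yields
\[ \log\frac{P(\vec{y})}{P(\vec{x})} \;\ge\; \sum_{\vec{r}} w_{\vec{r}}\sum_{i,j} r_{ij}\log\frac{y_{ij}}{x_{ij}} \;=\; \sum_{i,j} a_{ij}\log\frac{y_{ij}}{x_{ij}}. \]
Specializing to $\vec{y}=\Im(\vec{x})$, i.e.\ $y_{ij}=a_{ij}/b_i$, and regrouping per block $i$ turns the right-hand side into a nonnegative combination of relative entropies,
\[ \sum_{i,j} a_{ij}\log\frac{a_{ij}/b_i}{x_{ij}} \;=\; \sum_i b_i \sum_j \Im(\vec{x})_{ij}\log\frac{\Im(\vec{x})_{ij}}{x_{ij}} \;=\; \sum_i b_i\, D_{\mathrm{KL}}\!\big(\Im(\vec{x})_{i\cdot}\,\big\|\,x_{i\cdot}\big) \;\ge\; 0, \]
because within each block $i$ both $(\Im(\vec{x})_{ij})_j$ and $(x_{ij})_j$ are probability vectors. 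Chaining the two displays gives $P(\Im(\vec{x}))\ge P(\vec{x})$. For strictness: if $\Im(\vec{x})\neq\vec{x}$ then $\Im(\vec{x})_{i\cdot}\neq x_{i\cdot}$ for some $i$, so that KL term is strictly positive and, since $b_i>0$, the whole right-hand side is strictly positive, forcing $P(\Im(\vec{x}))>P(\vec{x})$.

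The step I expect to be fiddly is not the inequality but the boundary bookkeeping: when some $x_{ij}=0$ the terms $\log(y_{ij}/x_{ij})$ and the entropies need the usual $0\log 0 = 0$ convention together with the observation that the offending monomials carry weight $w_{\vec{r}}=0$ (so one may simply restrict the Jensen and Gibbs steps to the indices with $x_{ij}>0$); and when some $S_i(\vec{x})=0$ the map $\Im$ is undefined, which I would handle by reading the theorem with the natural convention there, or by extending the non-strict inequality from the interior by continuity of $P$ and of $\Im$ on its domain. Homogeneity of $P$ is not actually needed for this argument — it only enters, through Euler's identity, in the cosmetic fact that $\sum_{i,j} a_{ij} = d$.
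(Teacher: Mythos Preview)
The paper does not actually prove this theorem: it is stated in Section~3.3 as background, with a citation to the original Baum--Eagon paper, and is then used as a black box (the paper's own contribution is the non-homogeneous extension in Corollary~\ref{thm:generalBaumEagon} and the application to $\text{MWU}_\ell$). So there is no ``paper's own proof'' to compare against.

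That said, your argument is correct and is essentially the classical route to the Baum--Eagon inequality, phrased in information-theoretic language: the Jensen step is the weighted AM--GM inequality that Baum and Eagon use, and your KL/Gibbs step packages the per-block optimization cleanly. The strictness argument is sound (positivity of $b_i$ plus strict positivity of the relevant KL term forces a strict lower bound, and the Jensen direction is the right way). Your remarks on boundary bookkeeping are accurate, and your final observation that homogeneity is inessential is exactly what the paper exploits in Corollary~\ref{thm:generalBaumEagon} --- though the paper gets there by a padding-with-a-dummy-variable reduction rather than by noting, as you do, that the inequality argument never used homogeneity in the first place.
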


The Baum-Welch  algorithm is a classic technique used to find the unknown parameters of a hidden Markov model (HMM).
A HMM describes the joint probability of a collection of ``hidden" and observed discrete random variables.  It relies on the assumption that the $i$-th hidden variable given the $(i - 1)$-th hidden variable is independent of previous hidden variables, and the current observation variables depend only on the current hidden state.
The Baum-Welch algorithm uses the well known EM algorithm to find the maximum likelihood estimate of the parameters of a hidden Markov model given a set of observed feature vectors. More detailed exposition of these ideas can be found here \cite{bilmes1998gentle}.
The probability of making a specific time series of observations of length $T$ can be shown to be a homogeneous polynomial $P$ of degree $T$ with nonnegative (integer) coefficients of the model parameters. Baum-Welch algorithm is  homologous to the iterative process derived by applying the Baum-Eagon theorem to polynomial $P$ \cite{Baum_Eagon,welch2003hidden}. 

In a nutshell, both Baum-Welch and $\text{MWU}_\ell$ in congestion games are  special cases of the Baum-Eagon iterative process (for different polynomials $P$).

\subsection{Multiplicative Weights Update}\label{ourdynamic}
\noindent
In this section, we describe the MWU dynamics (both the linear $\text{MWU}_\ell$, and the exponential $\text{MWU}_e$ variants) applied in congestion games. The update rule (function) $\xi : \Delta \to \Delta$ (where $\vec{p}(t+1) = \xi(\vec{p}(t))$) for the linear variant $\text{MWU}_\ell$ is as follows:

\begin{equation}
    p_{i\gamma}(t+1) = (\xi(\vec{p}(t)))_{i\gamma} = p_{i\gamma}(t)\frac{1-\epsilon_{i}c_{i\gamma}(t)}{1-\epsilon_{i}\hat{c}_{i}(t)}, \,\,\,{\forall i \in \mathcal{N}, \forall \gamma \in S_i},
    \label{Our_Dynamics}
\end{equation}
\noindent
where $\epsilon_{i}$ is a constant (can depend on player $i$ but not on $\vec{p}$) so that both enumerator and denominator of the fraction in (\ref{Our_Dynamics}) are positive (and thus the fraction is well defined). Under the assumption that $1/\epsilon_{i} > \frac{1}{\beta} \defeq \sup_{i,\vec{p} \in \Delta, \gamma \in S_i} \left\{c_{i \gamma} \right\}$, it follows that $1/\epsilon_{i} > c_{i\gamma}$ for all $i, \gamma$ and hence $1/\epsilon_{i} > \hat{c}_i$. 

\smallskip

The update rule (function) $\eta : \Delta \to \Delta$ (where $\vec{p}(t+1) = \eta(\vec{p}(t))$) for the exponential variant $\text{MWU}_e$ is as follows:
\begin{equation}
    p_{i\gamma}(t+1) = (\eta(\vec{p}(t)))_{i\gamma} = p_{i\gamma}(t)\frac{(1-\epsilon_{i})^{c_{i\gamma}(t)}}{\sum_{\gamma' \in S_i}p_{i\gamma'}(t)(1-\epsilon_{i})^{c_{i\gamma'}(t)}}, \,\,\,{\forall i \in \mathcal{N}, \forall \gamma \in S_i},
    \label{Our_Dynamics2}
\end{equation}
where $\epsilon_{i}<1$ is a constant (can depend on player $i$ but not on $\vec{p}$).

\noindent
\begin{remark}
Observe that $\Delta$ is invariant under the discrete dynamics \eqref{Our_Dynamics}, \eqref{Our_Dynamics2} defined above. If $p_{i\gamma}=0$ then $p_{i\gamma}$ remains zero, and if it is positive, it remains positive (both numerator and denominator are positive) and also is true that $\sum_{\gamma \in S_i}p_{i\gamma}=1$ for all agents $i$. A point $\vec{p}^{*}$  is called a fixed point if it stays invariant under the update rule of the dynamics, namely $\xi(\vec{p}^{*}) = \vec{p}^{*}$ or $\eta(\vec{p}^{*}) = \vec{p}^{*}$.
A point $\vec{p}^*$ is a fixed point of (\ref{Our_Dynamics}), (\ref{Our_Dynamics2}) if for all $i,\gamma$ with $p^*_{i\gamma}>0$ we have that $c_{i\gamma}= \hat{c}_{i}$. To see why, observe that if $p^{*}_{i\gamma}, p^{*}_{i\gamma'}>0$, then $c_{i\gamma} = c_{i\gamma'}$ and thus $c_{i\gamma} = \hat{c}_{i}$. We conclude that the set of fixed points of both dynamics (\ref{Our_Dynamics}), (\ref{Our_Dynamics2}) coincide and are supersets of the set of Nash equilibria of the corresponding congestion game.
\end{remark} 
\section{Convergence of $\textrm{MWU}_{\ell}$ to Nash Equilibria}
\label{section_Discr_Dynamics}
We first prove that  $\text{MWU}_\ell$  (\ref{Our_Dynamics}) converges to fixed points. Technically, we establish that function $\Psi \defeq \mathbb{E}_{\vec{s} \sim \vec{p}} \left[\Phi(\vec{s}) \right]$ is strictly decreasing along any nontrivial (i.e. nonequilibrium) trajectory, where $\Phi$ is the potential function of the  congestion game as defined in Section \ref{Concepts}. Formally we show the following theorem:
\begin{theorem} [$\Psi$ is decreasing]\label{mainthm}
Function $\Psi$ is decreasing w.r.t. time, \ie $\Psi(\vec{p}(t+1)) \leq \Psi(\vec{p}(t))$ where equality $\Psi(\vec{p}(t+1)) = \Psi(\vec{p}(t))$ holds \textbf{only} at fixed points.
\end{theorem}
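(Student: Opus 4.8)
The plan is to express $\text{MWU}_\ell$ as an instance of the Baum-Eagon iteration and invoke Theorem~\ref{thm:B-E}. Concretely, I would look for a polynomial $P(\vec{p})$ in the variables $\{p_{i\gamma}\}_{i\in\mathcal{N},\,\gamma\in S_i}$, homogeneous of some degree $d$, with nonnegative coefficients on the domain $D=\Delta$, such that the Baum-Eagon growth transform $\Im(\vec{p})$ for $P$ coincides exactly with the update map $\xi$ of \eqref{Our_Dynamics}. Matching the two requires
\[
\frac{p_{i\gamma}\,\partial P/\partial p_{i\gamma}}{\sum_{\delta\in S_i} p_{i\delta}\,\partial P/\partial p_{i\delta}}
= p_{i\gamma}\,\frac{1-\epsilon_i c_{i\gamma}}{1-\epsilon_i \hat c_i},
\]
so it suffices to produce $P$ with $\partial P/\partial p_{i\gamma} = \lambda_i\,(1-\epsilon_i c_{i\gamma})$ for each $i,\gamma$ (where $\lambda_i$ may depend on $\vec p_{-i}$ but not on $\gamma$), since then the denominator becomes $\lambda_i(1-\epsilon_i\hat c_i)$ and everything cancels. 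The natural candidate is built from the expected potential: set
\[
P(\vec{p}) \;=\; K - \Psi(\vec{p})\;=\;K - \mathbb{E}_{\vec s\sim \vec p}\big[\Phi(\vec s)\big] \quad\text{(up to a }\prod_i\text{-normalization to make it homogeneous),}
\]
for a suitably large constant $K$, because for congestion games the gradient of the expected potential recovers expected costs: $\partial \Psi/\partial p_{i\gamma} = c_{i\gamma}$ (this is the standard ``potential = each agent's marginal cost'' identity, which I would verify by differentiating $\Phi(\vec s)=\sum_e\sum_{j\le \ell_e(\vec s)} c_e(j)$ and using $c_i(\vec s)=\sum_{e\in s_i} c_e(\ell_e(\vec s))$). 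The learning rates $\epsilon_i$ get absorbed by rescaling: one works with $\sum_i \frac{1}{\epsilon_i}(\text{something}) - \Psi$ or, more cleanly, observes that the factor $(1-\epsilon_i c_{i\gamma})$ is exactly what appears in $\partial/\partial p_{i\gamma}$ of an expression of the form $\frac{1}{\epsilon_i}\big(\text{linear-in-}p_i\big) - \Psi$.

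The main technical obstacles are (i) making $P$ genuinely a \emph{polynomial with nonnegative coefficients, homogeneous of a fixed degree}, and (ii) handling the per-agent learning rates $\epsilon_i$ cleanly. For (i): $\Psi$ is already a polynomial in $\{p_{i\gamma}\}$ — expanding $\mathbb{E}_{\vec s\sim\vec p}[\Phi(\vec s)]$ gives a sum over strategy profiles $\vec s$ of $\Phi(\vec s)\prod_i p_{is_i}$, which is multilinear, one factor per agent, hence homogeneous of degree $N$ after using $\sum_\gamma p_{i\gamma}=1$ to homogenize the lower-degree pieces. But $-\Psi$ has \emph{negative} coefficients, so I must add a large multiple of $\big(\prod_{i}\sum_{\gamma}p_{i\gamma}\big)$-type terms (which equal $1$ on $D$ and don't change the dynamics because their contribution to $\partial P/\partial p_{i\gamma}$ is a constant independent of $\gamma$, hence cancels in the ratio) to swamp all negative coefficients; the bound $1/\epsilon_i > \sup c_{i\gamma} = 1/\beta$ is exactly what guarantees such a $K$ exists with $1-\epsilon_i c_{i\gamma}>0$ throughout. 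For (ii), I would define
\[
P(\vec p)\;=\;C\cdot\!\!\prod_{i\in\mathcal N}\!\Big(\sum_{\gamma\in S_i}p_{i\gamma}\Big) \;-\;\sum_{\vec s}\Big(\prod_i p_{is_i}\Big)\,\Psi\text{-style weight}
\]
with weights chosen so that $\partial P/\partial p_{i\gamma}$ comes out proportional to $1-\epsilon_i c_{i\gamma}$; the asymmetry in $\epsilon_i$ is absorbed by weighting agent $i$'s contribution by $1/\epsilon_i$, which is legitimate since Baum-Eagon only needs one global polynomial and the ratio in $\Im$ is taken within a single $i$.

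Once $P$ is in hand with all hypotheses of Theorem~\ref{thm:B-E} verified, the conclusion is immediate: $P(\Im(\vec p))>P(\vec p)$ unless $\Im(\vec p)=\vec p$, i.e. $P(\vec p(t+1))>P(\vec p(t))$ unless $\vec p(t)$ is a fixed point of $\xi$. Since $P=K-\Psi$ (on $D$, up to the harmless homogenizing constant), this reads $\Psi(\vec p(t+1))<\Psi(\vec p(t))$ unless $\vec p(t)$ is a fixed point, and $\Psi(\vec p(t+1))=\Psi(\vec p(t))$ only at fixed points — which is exactly the statement of Theorem~\ref{mainthm}. I would also note that the set of fixed points of $\Im$ for this $P$ coincides with the fixed points of $\xi$ identified in the Remark in Section~\ref{ourdynamic} (those $\vec p$ with $c_{i\gamma}=\hat c_i$ on the support of $\vec p_i$), so the ``equality only at fixed points'' clause is precisely matched. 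The single genuinely delicate point, which I'd expect to spend the most care on, is the bookkeeping that makes $P$ simultaneously (a) of fixed homogeneous degree, (b) nonnegative-coefficiented, and (c) yielding $\partial P/\partial p_{i\gamma}\propto_i 1-\epsilon_i c_{i\gamma}$ — everything else is a direct appeal to Baum-Eagon plus the elementary potential-gradient identity.
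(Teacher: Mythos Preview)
Your proposal is correct and follows essentially the same route as the paper: define $Q=\text{const}-\Psi$ (the paper uses $Q(\vec p)=\sum_i(1/\epsilon_i-1/\beta)\sum_\gamma p_{i\gamma}+\tfrac{1}{\beta}\prod_i\sum_\gamma p_{i\gamma}-\Psi(\vec p)$), verify $\partial Q/\partial p_{i\gamma}=1/\epsilon_i-c_{i\gamma}$ via the potential-gradient identity, check nonnegativity of coefficients using $1/\epsilon_i>1/\beta$, and invoke Baum--Eagon. The one place the paper differs from your sketch is the homogeneity issue you flag as delicate: rather than homogenize $Q$ directly, the paper proves a short corollary extending Theorem~\ref{thm:B-E} to non-homogeneous polynomials by adjoining a dummy variable $y\equiv 1$ and padding each monomial with powers of $y$---this sidesteps your bookkeeping concern (a) entirely.
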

\noindent
We define the function
\begin{equation}
    Q(\vec{p}) \defeq \underbrace{\sum_{i \in \mathcal{N}} \left ( \left(1/\epsilon_i - 1/\beta\right) \cdot \sum_{\gamma \in S_i} p_{i\gamma} \right ) + 1/\beta \cdot \prod_{i \in \mathcal{N}} \left(\sum_{\gamma \in S_i}p_{i\gamma} \right)}_{\textrm{constant term}} - \Psi(\vec{p}),
  \label{Q_definition}
\end{equation}
and show that $Q(\vec{p})$ is strictly increasing w.r.t time, unless $\vec{p}$ is a fixed point. Observe that $\sum_{\gamma \in S_i} p_{i\gamma}=1$ since $\vec{p}$ lies in $\Delta$, but we include this terms in $Q$ for technical reasons that will be made clear later in the section.
By showing that $Q$ is increasing with time, Theorem \ref{mainthm} trivially follows since $Q = const - \Psi$ where $const = \sum_{i \in \mathcal{N}}1/\epsilon_i$. To show that $Q(\vec{p})$ is strictly increasing w.r.t time, unless $\vec{p}$ is a fixed point, we use a generalization of an inequality by Baum and Eagon \cite{Baum_Eagon} on function $Q$.
\begin{corollary}[Generalization of Baum-Eagon] Theorem \ref{Thm_Baum_Eagon} holds even if $P$ is non-homogeneous. \label{thm:generalBaumEagon}
\end{corollary}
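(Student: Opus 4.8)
The plan is to reduce the non-homogeneous case to the homogeneous one by a standard homogenization trick, so that Theorem~\ref{Thm_Baum_Eagon} can be invoked as a black box. Suppose $P(\vec{x})$ is a polynomial with nonnegative coefficients (not necessarily homogeneous) on the domain $D$ where the blocks of variables $(x_{ij})_{j=1}^{q_i}$ each sum to $1$. Let $d$ be the degree of $P$. First I would write $P = \sum_{k=0}^{d} P_k$ where $P_k$ is the degree-$k$ homogeneous part, each with nonnegative coefficients. The idea is that on $D$ we may multiply any monomial of degree $k<d$ by the factor $\bigl(\sum_{j=1}^{q_1} x_{1j}\bigr)^{d-k} = 1$ without changing the value of $P$ anywhere on $D$; doing this to every $P_k$ produces a genuinely homogeneous polynomial $\widetilde{P}$ of degree $d$ with nonnegative coefficients that agrees with $P$ pointwise on $D$.

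The key point that makes this work is that the Baum--Eagon growth map $\Im$ is insensitive to this modification. I would verify that $x_{ij}\,\partial_{x_{ij}}\widetilde P$ evaluated at a point $\vec{x}\in D$ equals $x_{ij}\,\partial_{x_{ij}}P$ at the same point, once we are allowed to use $\sum_j x_{ij}=1$. Concretely, differentiating $\widetilde P$ via the product rule produces, besides the ``inner'' derivative term that reproduces $\partial_{x_{ij}}P$ after resetting $\sum_j x_{1j}=1$, extra terms coming from differentiating the padding factors $\bigl(\sum_{j}x_{1j}\bigr)^{d-k}$; but after multiplying by $x_{ij}$ and summing over $j$ to form the normalizer, these extra contributions either vanish on $D$ or cancel in the ratio defining $\Im(\vec{x})_{ij}$. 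Hence $\Im_{\widetilde P}(\vec{x}) = \Im_{P}(\vec{x})$ for all $\vec{x}\in D$, and likewise $\widetilde P(\Im(\vec{x})) = P(\Im(\vec{x}))$ and $\widetilde P(\vec{x}) = P(\vec{x})$ since both images lie in $D$. Applying Theorem~\ref{Thm_Baum_Eagon} to $\widetilde P$ then gives $P(\Im(\vec{x})) = \widetilde P(\Im(\vec{x})) > \widetilde P(\vec{x}) = P(\vec{x})$ unless $\Im(\vec{x})=\vec{x}$, which is exactly the claimed statement.

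I expect the main obstacle to be the bookkeeping in the derivative computation: one has to be careful that the padding is done with a single fixed block (say $i=1$), track how $\partial_{x_{ij}}$ acts differently depending on whether $i=1$ or $i\neq 1$, and check that the spurious terms really do drop out after forming the normalized ratio $\bigl(x_{ij}\partial_{x_{ij}}\widetilde P\bigr)/\sum_{j'} x_{ij'}\partial_{x_{ij'}}\widetilde P$ restricted to $D$. An alternative, perhaps cleaner, route that avoids picking a distinguished block is to introduce a fresh auxiliary variable $x_{01}$ forming its own singleton block $\{x_{01}\}$ with $x_{01}=1$ on the extended domain, homogenize by multiplying each $P_k$ by $x_{01}^{d-k}$, apply Baum--Eagon on the extended domain, and observe that the update leaves $x_{01}=1$ fixed and restricts to the original $\Im$ on the other blocks. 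I would present whichever of these is shorter, but either way the substance is the elementary observation that homogenization by factors equal to $1$ on $D$ changes neither $P$ nor its Baum--Eagon dynamics on $D$.
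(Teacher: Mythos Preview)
Your second alternative---introducing a fresh singleton block $\{x_{01}\}$ with $x_{01}=1$ and homogenizing by powers of $x_{01}$---is exactly the paper's proof. The paper calls the auxiliary variable $y$, checks that the Baum--Eagon update sends $y\mapsto y\,\partial_y P'/(y\,\partial_y P')=1$, and that $\partial_{x_{ij}}P'(\vec{x},1)=\partial_{x_{ij}}P(\vec{x})$, so the extended dynamics restricts to the original one on the other blocks.

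Your first route, however, does not go through. If you homogenize by multiplying $P_k$ by $\bigl(\sum_{j} x_{1j}\bigr)^{d-k}$, then on $D$ one gets, for the distinguished block $i=1$,
\[
\partial_{x_{1j}}\widetilde P \;=\; \partial_{x_{1j}}P \;+\; E,\qquad E:=\sum_{k}(d-k)\,P_k(\vec{x}),
\]
with $E$ independent of $j$. The Baum--Eagon ratio for $\widetilde P$ on block $1$ is therefore
\[
\frac{x_{1j}\bigl(\partial_{x_{1j}}P+E\bigr)}{\sum_{j'}x_{1j'}\,\partial_{x_{1j'}}P+E},
\]
which coincides with $x_{1j}\,\partial_{x_{1j}}P\big/\sum_{j'}x_{1j'}\,\partial_{x_{1j'}}P$ only when $E=0$ or when block $1$ is already at rest. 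So in general $\Im_{\widetilde P}\neq\Im_P$, and applying Theorem~\ref{Thm_Baum_Eagon} to $\widetilde P$ yields $P(\Im_{\widetilde P}(\vec{x}))>P(\vec{x})$, not the desired inequality for $\Im_P$. The ``spurious terms'' you flagged do not cancel; stick with the singleton-block homogenization.
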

\begin{proof} We prove it by doing a reduction. Let $P(\vec{x})$ be a non-homogeneous polynomial of degree $d$ on variables $\{x_{ij}\}$ with $\vec{x} \in D$ ($D$ is a product of simplices). We introduce a dummy variable $y$ that is always set to one and $D' = \{(\vec{x},y): \vec{x} \in D, y=1\}$. We define the polynomial $P'(\vec{x},y)$ where for each monomial of $P$ with total degree $d'$ so that $d' \leq d$, we have the same monomial in $P'$ multiplied by $y^{d-d'}$. It is obvious to see that $P'$ is homogeneous of degree $d$. It is also obvious to check that the dynamics as defined in Theorem \ref{Thm_Baum_Eagon} for polynomial $P'$ remains the same as for polynomial $P$ (apart from the extra(dummy) variable $y$ which is always one) since if $y=1$ at time $t$ then at time $t+1$, $y$ is equal to $\frac{y\frac{\partial P'(\vec{x},y)}{\partial y}}{y\frac{\partial P'(\vec{x},y)}{\partial y}} = 1$, \ie $y$ indeed is always equal to one and $ \left.  \frac{\partial P'(\vec{x},y)}{\partial x_{ij}}\right|_{(\vec{x,1})}  =  \left. \frac{\partial P(\vec{x})}{\partial x_{ij}}\right|_{(\vec{x})}$.

\noindent
We conclude that Theorem \ref{Thm_Baum_Eagon} holds for non-homogeneous polynomials.
\end{proof}
We want to apply Corollary \ref{thm:generalBaumEagon} on $Q$. To do so, it suffices to show that $Q(\vec{p})$ is a polynomial with nonnegative coefficients.  
\begin{lemma}
\label{lem:polynomial}
$Q(\vec{p})$ is a polynomial with respect to $p_{i \gamma}$ and has nonnegative coefficients.
\end{lemma}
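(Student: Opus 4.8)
The plan is to expand $Q(\vec p)$ explicitly and check term by term that every coefficient is nonnegative. Recall $Q(\vec p) = \sum_i (1/\epsilon_i - 1/\beta)\sum_{\gamma\in S_i} p_{i\gamma} + (1/\beta)\prod_i(\sum_{\gamma\in S_i}p_{i\gamma}) - \Psi(\vec p)$, so the whole question reduces to understanding $\Psi(\vec p) = \Exp_{\vec s\sim\vec p}[\Phi(\vec s)]$ as a polynomial in the $p_{i\gamma}$ and seeing that the first two (manifestly nonnegative-coefficient) groups of terms dominate it coefficientwise. First I would write $\Psi(\vec p) = \sum_{\vec s} \Phi(\vec s)\prod_i p_{i s_i}$, which is a polynomial that is multilinear and has exactly one variable from each player's block in every monomial — i.e. it is homogeneous of degree $N$ when restricted to $\prod_i(\sum_\gamma p_{i\gamma}) = 1$, but written out naively its monomials $\prod_i p_{i s_i}$ all have nonnegative coefficients $\Phi(\vec s)\ge 0$. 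So the subtraction of $\Psi$ is the only source of negative coefficients, and I must show those are cancelled.

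The key step is to rewrite things so the cancellation is visible. I would use the identity $\hat c_i = \sum_\gamma p_{i\gamma} c_{i\gamma}$ together with the standard fact for congestion/potential games that relates the expected potential to expected costs: namely that $\Psi(\vec p) = \Exp_{\vec s}[\Phi(\vec s)]$ can be bounded (coefficientwise, monomial by monomial) by something like $\sum_i \hat c_i$ plus lower-order terms, since $\Phi(\vec s) \le \sum_i c_i(\vec s)$ pointwise (each edge's potential contribution $\sum_{j=1}^{\ell_e} c_e(j) \le \ell_e \cdot c_e(\ell_e)$ splits among the players using $e$). More precisely, I expect the right move is: for each pure profile $\vec s$, $\Phi(\vec s) \le \sum_{i} c_i(\vec s)$, and the monomial $\prod_i p_{i s_i}$ appearing in $\Psi$ with coefficient $\Phi(\vec s)$ gets compared against the same monomial's contribution inside $(1/\beta)\prod_i(\sum_\gamma p_{i\gamma})$ (which contributes coefficient $1/\beta$ to every degree-$N$ multilinear monomial $\prod_i p_{i s_i}$) plus contributions from the linear terms. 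Since $\Phi(\vec s) \le \sum_i c_i(\vec s) \le N/\beta \le \cdots$, and $1/\beta = \sup c_{i\gamma} \ge c_i(\vec s)$'s per-player pieces, one checks degree-$N$ monomials are fine; the lower-degree monomials of $\Psi$ (which arise only if some $S_i$ has a strategy appearing... actually every monomial of $\Psi$ has degree exactly $N$, one factor per player) — so I'd need to be careful: $\Psi$ is purely degree $N$, while $Q$ also contains degree-$1$ and degree-$N$ pieces. The honest bookkeeping is to express $Q$ on the simplex-free polynomial ring and track each monomial.

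The cleanest route, which I would actually pursue, is: expand $c_{i\gamma}(t) = \Exp_{\vec s_{-i}}[c_i(\gamma,\vec s_{-i})]$ as a polynomial in $\{p_{j\delta}\}_{j\ne i}$ with nonnegative coefficients (it is, since it's an average of nonnegative constants $c_i(\gamma,\vec s_{-i})$), note that $\beta \cdot c_{i\gamma} \le 1$ so $1 - \epsilon_i c_{i\gamma}$-type quantities are nonnegative, and then show $\Psi = \sum_i \sum_\gamma p_{i\gamma} c_{i\gamma} - R(\vec p)$ where $R$ has nonnegative coefficients (this is the potential-game telescoping: $\Phi(\vec s) = \sum_i c_i(\vec s) - (\text{nonneg})$ because of the "going-from-empty-to-full" vs "marginal" discrepancy). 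Then $Q = \sum_i(1/\epsilon_i - 1/\beta)\sum_\gamma p_{i\gamma} + (1/\beta)\prod_i(\sum_\gamma p_{i\gamma}) - \sum_i\sum_\gamma p_{i\gamma}c_{i\gamma} + R(\vec p)$; the term $(1/\beta)\sum_\gamma p_{i\gamma}$ should be distributed to absorb $\sum_\gamma p_{i\gamma} c_{i\gamma}$ using $1/\beta \ge c_{i\gamma}$ coefficientwise, and crucially the product term $(1/\beta)\prod_i(\sum_\gamma p_{i\gamma})$ supplies the cross-player monomials needed since $c_{i\gamma}$ depends on the other players' variables. The main obstacle I anticipate is precisely this coefficientwise (not just numerical) domination of $\sum_\gamma p_{i\gamma} c_{i\gamma}(\vec p_{-i})$ — a polynomial mixing player $i$'s and others' variables — by $(1/\beta)\prod_i(\sum_\gamma p_{i\gamma})$: one must verify that for each monomial of $p_{i\gamma}c_{i\gamma}$, the corresponding monomial of the full product appears with a large enough coefficient, which is where the definition $1/\beta = \sup_{i,\vec p,\gamma} c_{i\gamma}$ and the fact that $c_{i\gamma}$ is itself an average (hence its coefficients sum to at most $1/\beta$) get used. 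I would organize the write-up as: (i) expand $\Psi$ and extract $R\ge 0$ via the potential identity; (ii) expand each $c_{i\gamma}$ and bound its coefficients; (iii) match monomials of $\sum_i\sum_\gamma p_{i\gamma}c_{i\gamma}$ against $(1/\beta)\prod_i(\sum_\gamma p_{i\gamma})$ plus the linear slack $(1/\epsilon_i-1/\beta)\sum_\gamma p_{i\gamma}$; (iv) conclude all coefficients of $Q$ are $\ge 0$.
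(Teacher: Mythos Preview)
Your plan has a genuine gap at step (iii). After writing $\Psi=\sum_i\hat c_i-R$ and substituting, you need $(1/\beta)\prod_i\big(\sum_\gamma p_{i\gamma}\big)-\sum_i\hat c_i$ to have nonnegative coefficients (plus $R$ and the linear slack). But this coefficientwise domination fails: the degree-$N$ monomial $\prod_j p_{js_j}$ gets coefficient $1/\beta$ from the product term and coefficient $\sum_i c_i(\vec s)$ from $\sum_i\hat c_i$, so the difference is $1/\beta-\sum_i c_i(\vec s)$, which is typically negative (each $c_i(\vec s)\le 1/\beta$, but their \emph{sum} can be as large as $N/\beta$). Adding $R$ back does not rescue anything, since $R=\sum_i\hat c_i-\Psi$ cancels and returns you to $(1/\beta)\prod-\Psi$; your $R$-decomposition is circular. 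Your remark that ``the product term supplies the cross-player monomials'' is the right intuition but with the wrong accounting: one copy of $(1/\beta)\prod$ contributes only $1/\beta$ to each full monomial, not enough to absorb \emph{all} players' costs at once.

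The paper avoids this over-counting via a different device. Rather than comparing $\Psi$ against $\sum_i\hat c_i$, it first extracts from the potential-game identity $c_i(\vec s)=\Phi(\vec s)+D_i(\vec s_{-i})$ the relation $\partial\Psi/\partial p_{i\gamma}=c_{i\gamma}$ (differentiate $\hat c_i=\Psi+\Exp[D_i(\vec s_{-i})]$ in $p_{i\gamma}$; the $D_i$ term contributes zero). Consequently, when one fixes $(i,\gamma)$ and collects all monomials of $Q$ containing the factor $p_{i\gamma}$, the $-\Psi$ part contributes $-p_{i\gamma}\,c_{i\gamma}$ for that \emph{single} player, and this is matched against $(1/\beta)\,p_{i\gamma}\prod_{j\ne i}\big(\sum_\delta p_{j\delta}\big)$: the coefficient of $p_{i\gamma}\prod_{j\ne i}p_{js_j}$ is then $1/\beta-c_i(\gamma,\vec s_{-i})\ge 0$ by definition of $\beta$, and the residual linear piece $(1/\epsilon_i-1/\beta)p_{i\gamma}$ is nonnegative since $1/\epsilon_i>1/\beta$. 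The idea you are missing is exactly this per-player localization via $\partial\Psi/\partial p_{i\gamma}=c_{i\gamma}$; without it you are forced to dominate the aggregate $\sum_i c_i(\vec s)$ with a single $1/\beta$, which cannot work.
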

\begin{proof}
In a congestion game, the cost of the function of any player $i$ can be written as the sum of the potential function $\Phi(\textbf{s})$ and a dummy term which depends on the actions of all the rest players (not on the actions of player $i$), \ie
    \begin{equation}
        c_i(\vec{s}) = \Phi(\vec{s}) + D_i(\vec{s}_{-i}).
    \label{Potential_Game}
    \end{equation}
    By taking expectations in Equation (\ref{Potential_Game}) we get that $\hat{c}_i = \Psi + \mathbb{E}_{\vec{s}_{-i} \sim \vec{p}_{-i}}[D_i(\vec{s}_{-i})]$. Using the law of total expectation it also follows that the expected cost of player $i$ satisfies $\hat{c}_{i} = \sum_{\gamma \in S_i} p_{i\gamma} c_{i\gamma}$. Therefore $\sum_{\gamma \in S_i} p_{i\gamma} c_{i\gamma} = \Psi(\vec{p}) + \mathbb{E}_{\vec{s}_{-i} \sim \vec{p}_{-i}}[D_i(\vec{s}_{-i})]$.

We take the partial derivative of both L.H.S and R.H.S for variable $p_{i\gamma}$ and we conclude that the  following holds:
\begin{equation}
      c_{i\gamma} = \frac{\partial \Psi(\vec{p})}{\partial p_{i \gamma}} +\underbrace{\frac{\partial \mathbb{E}_{\vec{s}_{-i} \sim \vec{p}_{-i}}[D_i(\vec{s}_{-i})]}{\partial p_{i\gamma}}}_{=0},
      \textrm{ therefore } \frac{\partial Q(\vec{p})}{\partial p_{i \gamma}} = \underbrace{1/\epsilon_i - 1/\beta + 1/\beta \cdot \prod_{j\neq i} \left ( \sum_{\gamma \in S_j}p_{j\gamma} \right) - c_{i \gamma}}_{1/\epsilon_i - c_{i\gamma} \textrm{ since } \vec{p} \in \Delta}\label{eq:derivative}
\end{equation}
\noindent
Since the R.H.S of (\ref{eq:derivative}) does not depend on $p_{i\gamma}$, $Q$ is a linear function w.r.t $p_{i \gamma}$ for all $i \in \mathcal{N}, \gamma \in S_i$. Therefore, it is a polynomial of degree $N$ with respect to $\vec{p}$.

Finally, we will show that all the coefficients of  the  polynomial $Q$  are non-negative.
Let's focus on the monomials containing the term $p_{i \gamma}$ (for some $i, \gamma$).
By (\ref{eq:derivative}) we have that the summation of those monomials is equal to 
$(1/\epsilon_i - 1/\beta)p_{i \gamma} + \left ( 1/\beta \cdot  \prod_{j\neq i} \left ( \sum_{\gamma \in S_j}p_{j\gamma} \right) - c_{i \gamma} \right )  p_{i \gamma}$ which expands to $(1/\epsilon_i - 1/\beta)p_{i \gamma} + \left ( 1/\beta \cdot  \sum_{\mathbf{s}_{-i}\in\mathbf{S}_{-i} } \prod_{j \neq i} p_{j \mathbf{s}_{j}} - c_{i \gamma} \right )  p_{i \gamma}$, where  $\mathbf{S}_{-i} \defeq \times_{j\neq i}S_j$. However, we have

         \begin{equation*}
          c_{i \gamma} = \sum_{\mathbf{s}_{-i}\in \mathbf{S}_{-i} } \prod_{j \neq i} p_{j \mathbf{s}_{j}} \cdot \underbrace{\left( \sum_{e \in \gamma} c_e\left(1+k_e(\mathbf{s}_{-i})\right) \right)}_{\leq \frac{1}{\beta} \textrm{ by definition of } \beta},
         \end{equation*} where $k_e (\mathbf{s}_{-i})$ denotes the number of players apart from $i$ that choose edge $e$ in the strategy profile $\vec{s}_{-i}$.  Combining everything together we have that summation of all monomials including $p_{i \gamma}$ is equal to:
         
    $$(1/\epsilon_i - 1/\beta)p_{i \gamma} +  \Big( 1/\beta - \underbrace{\left( \sum_{e \in \gamma} c_e(1+k_e(\mathbf{s}_{-i})) \right)}_{\leq \frac{1}{\beta}} \Big) \cdot \sum_{\mathbf{s}_{-i}\in \mathbf{S}_{-i} } \prod_{j \neq i} p_{j \mathbf{s}_{j}}\cdot  p_{i \gamma}$$    
         
  Clearly, each summand has a nonnegative coefficient. Hence, each monomial containing $p_{i \gamma}$ has a nonnegative coefficient.  The above is true for all $i,\gamma$ and the claim follows.
\end{proof}

\noindent
Using Lemma \ref{lem:polynomial} and Corollary \ref{thm:generalBaumEagon} we show the following:
\begin{theorem}
Let $Q$ be the function defined in \eqref{Q_definition}. Let also $\vec{p}(t) \in \Delta$ be the point $\text{MWU}_\ell$  (\ref{Our_Dynamics}) outputs at time $t$ with update rule $\xi$. It holds that $Q(\vec{p}(t+1)) \defeq Q(\xi(\vec{p}(t)))>Q(\vec{p}(t))$ unless $\xi(\vec{p}(t))=\vec{p}(t)$ (fixed point). Namely $Q$ is strictly increasing with respect to the number of iterations $t$ unless $\text{MWU}_\ell$  is at a fixed point.
\label{Qincreasing}
\end{theorem}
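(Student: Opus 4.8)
The plan is to identify the $\text{MWU}_\ell$ update rule $\xi$ of \eqref{Our_Dynamics} with the abstract Baum--Eagon iteration $\Im$ attached to the polynomial $Q$, and then simply invoke the generalized Baum--Eagon inequality. First I would check that the hypotheses of Corollary \ref{thm:generalBaumEagon} are in place. The domain $\Delta = \times_{i}\Delta(S_i)$ is exactly a product of simplices, with the player index $i \in \mathcal{N}$ playing the role of the outer index and $\gamma \in S_i$ the inner one, so that $\sum_{\gamma \in S_i} p_{i\gamma} = 1$ for every $i$. By Lemma \ref{lem:polynomial}, $Q$ is a polynomial in the variables $\{p_{i\gamma}\}$ with nonnegative coefficients; it need not be homogeneous, but that is precisely the situation covered by Corollary \ref{thm:generalBaumEagon}. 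Hence the corollary applies and gives $Q(\Im(\vec{p})) > Q(\vec{p})$ unless $\Im(\vec{p}) = \vec{p}$, where $\Im(\vec{p})_{i\gamma} = \bigl(p_{i\gamma}\,\partial_{p_{i\gamma}} Q\bigr)\big/\sum_{\gamma' \in S_i} p_{i\gamma'}\,\partial_{p_{i\gamma'}} Q$.

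Next I would compute $\Im$ explicitly. Equation \eqref{eq:derivative} in the proof of Lemma \ref{lem:polynomial} already establishes that $\partial Q/\partial p_{i\gamma} = 1/\epsilon_i - c_{i\gamma}$ on $\Delta$. Substituting this and using $\sum_{\gamma'} p_{i\gamma'} = 1$ together with $\sum_{\gamma'} p_{i\gamma'} c_{i\gamma'} = \hat{c}_i$ gives
\[
\Im(\vec{p})_{i\gamma} = p_{i\gamma}\,\frac{1/\epsilon_i - c_{i\gamma}}{\sum_{\gamma'} p_{i\gamma'}\left(1/\epsilon_i - c_{i\gamma'}\right)} = p_{i\gamma}\,\frac{1/\epsilon_i - c_{i\gamma}}{1/\epsilon_i - \hat{c}_i} = p_{i\gamma}\,\frac{1-\epsilon_i c_{i\gamma}}{1-\epsilon_i \hat{c}_i},
\]
which is exactly $\xi(\vec{p})_{i\gamma}$ from \eqref{Our_Dynamics}. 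Here I would stress that the normalizing denominator $\sum_{\gamma'} p_{i\gamma'}\,\partial_{p_{i\gamma'}} Q = 1/\epsilon_i - \hat{c}_i$ is strictly positive by the admissibility assumption $1/\epsilon_i > 1/\beta \geq \hat{c}_i$, so $\Im$ is well-defined on all of $\Delta$ and the division above is legitimate.

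Combining the two steps: since $\Im \equiv \xi$ on $\Delta$, Corollary \ref{thm:generalBaumEagon} yields $Q(\xi(\vec{p}(t))) > Q(\vec{p}(t))$ unless $\xi(\vec{p}(t)) = \vec{p}(t)$, i.e.\ unless $\vec{p}(t)$ is a fixed point, which is the statement of the theorem. I do not anticipate a genuine obstacle here: the substantive content—that $Q$ has nonnegative coefficients and that $\partial_{p_{i\gamma}} Q = 1/\epsilon_i - c_{i\gamma}$—is already discharged in Lemma \ref{lem:polynomial}, and the non-homogeneous Baum--Eagon inequality is in hand. The only points requiring care are the bookkeeping that matches the abstract iteration $\Im$ with the concrete rule $\xi$ (and hence matches ``$\Im(\vec{p}) = \vec{p}$'' with ``$\vec{p}$ is a fixed point of $\xi$''), and the verification that the denominator never vanishes, which is exactly what the restriction on the learning rates $\epsilon_i$ guarantees.
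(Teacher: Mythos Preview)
Your proposal is correct and follows essentially the same approach as the paper: invoke Lemma \ref{lem:polynomial} to verify the nonnegative-coefficient hypothesis, apply the non-homogeneous Baum--Eagon inequality (Corollary \ref{thm:generalBaumEagon}), and use the partial derivative computation \eqref{eq:derivative} to identify the abstract iteration $\Im$ with the concrete update $\xi$. Your added remark that the denominator $1/\epsilon_i - \hat{c}_i$ is strictly positive by the admissibility assumption is a small but welcome clarification.
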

\begin{proof}
  By Lemma \ref{lem:polynomial}, $Q(\vec{p})$ is a polynomial with has nonnegative coefficients. Therefore, we can apply Corollary \ref{thm:generalBaumEagon} for polynomial $Q$.  In this case, the Baum-Eagon theorem defines the  map:
  \begin{align*}
  p_{i\gamma}(t+1) &= \left( \left. p_{i\gamma}(t) \frac{\partial Q}{\partial p_{i\gamma}}\right|_{(\vec{p}(t))} \right) \left/ \sum_{\delta \in S_i} p_{i\delta} \left. \frac{\partial Q}{\partial p_{i\delta}}\right|_{(\vec{p}(t))} \right.
  \\& \overset{(\ref{eq:derivative})}{=} \frac{p_{i\gamma}(t) (1/\epsilon_i - c_{i\gamma})}{\sum_{\delta \in S_i}p_{i\delta}(t)(1/\epsilon_i - c_{i\delta})} = p_{i\gamma}(t) \frac{1/\epsilon_i - c_{i\gamma}}{1/\epsilon_i - \hat{c}_i},
  \end{align*}
  which coincides with  $\text{MWU}_\ell$ (\ref{Our_Dynamics}). Thus, it is true that $Q(\vec{p}(t+1))>Q(\vec{p}(t))$ unless $\vec{p}(t+1)=\vec{p}(t)$. This proof justifies the reason we added the term $\sum_{i \in \mathcal{N}} \left ( \left(1/\epsilon_i - 1/\beta\right) \cdot \sum_{\gamma \in S_i} p_{i\gamma} \right ) + 1/\beta \cdot \prod_{i \in \mathcal{N}} \left(\sum_{\gamma \in S_i}p_{i\gamma} \right)$ in $Q$, namely so that the partial derivatives give us $\text{MWU}_\ell$ dynamics.
\end{proof}

As stated earlier in the section, if $Q(\vec{p}(t))$ is strictly increasing with respect to time $t$ unless $\vec{p}(t)$ is a fixed point, it follows that the expected potential function $\Psi(\vec{p}(t)) = const - Q(\vec{p}(t))$ is strictly decreasing unless $\vec{p}(t)$ is a fixed point and Theorem \ref{mainthm} is proved. Moreover, we can derive the fact that our dynamics converges to fixed points as a corollary of Theorem \ref{mainthm}.
\begin{theorem}[Convergence to fixed points]\label{cormain} $\text{MWU}_\ell$ dynamics (\ref{Our_Dynamics}) converges to fixed points.
\end{theorem}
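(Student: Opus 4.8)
The plan is to deduce Theorem~\ref{cormain} from Theorem~\ref{mainthm} (equivalently, from the strict monotonicity of $Q$ in Theorem~\ref{Qincreasing}) by a standard Lyapunov-function argument on the compact state space $\Delta$. First I would note that $\Delta = \times_i \Delta(S_i)$ is a compact subset of $\mathbb{R}^{\sum_i |S_i|}$ and that the update map $\xi$ is continuous on $\Delta$ (the denominator $1-\epsilon_i \hat c_i$ is bounded away from zero on $\Delta$ by the admissibility assumption $1/\epsilon_i > 1/\beta \geq \hat c_i$). Since $\Delta$ is invariant under $\xi$ (Remark after \eqref{Our_Dynamics2}), the trajectory $(\vec p(t))_{t\in\mathbb{N}}$ stays in $\Delta$, hence is bounded and has at least one limit point.

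Next I would use that $\Psi$ is continuous on $\Delta$ and, by Theorem~\ref{mainthm}, $\Psi(\vec p(t))$ is nonincreasing in $t$; being bounded below (it is continuous on a compact set) it converges to some value $\Psi^\ast$. Let $\vec p^\ast$ be any limit point of the trajectory, say $\vec p(t_k) \to \vec p^\ast$. By continuity of $\Psi$ we get $\Psi(\vec p^\ast) = \Psi^\ast$, and by continuity of $\xi$ and of $\Psi$ we also get $\Psi(\xi(\vec p^\ast)) = \lim_k \Psi(\xi(\vec p(t_k))) = \lim_k \Psi(\vec p(t_k+1)) = \Psi^\ast = \Psi(\vec p^\ast)$. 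Thus $\Psi(\xi(\vec p^\ast)) = \Psi(\vec p^\ast)$, so by the equality clause of Theorem~\ref{mainthm} (equality holds only at fixed points), $\vec p^\ast$ is a fixed point of $\xi$. This shows every limit point of the trajectory is a fixed point.

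To upgrade ``every limit point is a fixed point'' to genuine convergence (to a single point), I would invoke the structure of the fixed-point set together with the monotone Lyapunov function: along the trajectory $\Psi$ is strictly decreasing except at fixed points, and the fixed points of $\xi$ form a closed set on which $\Psi$ takes at most countably many — in fact, generically isolated — values; combined with $\Psi(\vec p(t)) \downarrow \Psi^\ast$ and the fact that all limit points lie in the level set $\{\Psi = \Psi^\ast\}$ intersected with the fixed-point set, one concludes the trajectory converges. (If one only wants the weaker but still meaningful statement that the trajectory converges to the \emph{set} of fixed points, i.e.\ $\mathrm{dist}(\vec p(t), \mathrm{Fix}(\xi)) \to 0$, this already follows directly from the previous paragraph plus compactness, without any isolation hypothesis.)

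The main obstacle is precisely this last step: a strictly decreasing Lyapunov function guarantees that the $\omega$-limit set consists of fixed points, but not a priori that it is a single point, since the fixed-point set is a connected manifold in general (any product of sub-simplices where all played strategies have equal cost). The clean way around this, which I expect the authors to take, is to establish that the limit set is contained in a connected component of $\mathrm{Fix}(\xi)$ on which $\Psi$ is constant and then appeal to a general theorem (e.g.\ Losert–Akin style, or the fact that an isolated-in-its-level-set limit set of a continuous map with a strict Lyapunov function is a singleton) — or simply to be content with convergence to the fixed-point set, which is all that is literally claimed by ``converges to fixed points.'' Either way, no heavy computation is needed beyond the continuity and compactness bookkeeping above.
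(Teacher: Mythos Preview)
Your proposal is correct and follows essentially the same Lyapunov argument as the paper: use compactness of $\Delta$, continuity of $\xi$ and $\Psi$, and the strict decrease of $\Psi$ from Theorem~\ref{mainthm} to conclude that every $\omega$-limit point is a fixed point. The paper phrases the key step via invariance of the $\omega$-limit set $\Omega$ (the full orbit from any $\vec y\in\Omega$ stays in $\Omega$, hence $\Psi$ is constant along it, hence $\vec y$ is fixed), whereas you directly check $\Psi(\xi(\vec p^\ast))=\Psi(\vec p^\ast)$ by continuity; these are equivalent one-line variants of the same idea. Your caution about ``single point'' versus ``set of fixed points'' is well placed: the paper's proof, like yours, only establishes that $\Omega\subseteq\mathrm{Fix}(\xi)$, and the upgrade to a genuine limit is deferred to Theorem~\ref{thm:convergencenash} under the explicit assumption that fixed points are isolated.
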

\begin{proof}
Let $\Omega \subset \Delta$ be the set of limit points of an orbit $\vec{p}(t)$. 
 $\Psi(\vec{p}(t))$ is decreasing with respect to time $t$ by Theorem \ref{mainthm} and so, because $\Psi$ is bounded on $\Delta$, $\Psi(\vec{p}(t))$ converges as $t\to \infty$  to $\Psi^{*}= \inf_t\{\Psi(\vec{p}(t))\}$. By continuity of $\Psi$ we get that $\Psi(\vec{y})= \lim_{t\to\infty} \Psi(\vec{p}(t)) = \Psi^*$ for all $\vec{y} \in \Omega$. So $\Psi$ is constant on $\Omega$. Also $\vec{y}(t) = \lim_{n \to \infty} \vec{p}(t_n + t)$ as $n \to \infty $ for some sequence of times $\{t_i\}$ and so $\vec{y}(t)$ lies in $\Omega$, i.e. $\Omega$ is invariant. Thus, if $\vec{y} \equiv \vec{y}(0) \in \Omega$ the orbit $\vec{y}(t)$ lies in $\Omega$ and so $\Psi(\vec{y}(t)) = \Psi^*$ on the orbit. But $\Psi$ is strictly decreasing except on equilibrium orbits and so $\Omega$ consists entirely of fixed points.
\end{proof}

We conclude the section by strengthening the convergence result (\ie Theorem \ref{cormain}). We show that if the initial distribution $\vec{p}$ is in the interior of $\Delta$ then we have convergence to Nash equilibria.
\begin{theorem}[Convergence to Nash equilibria] \label{thm:convergencenash} Assume that the fixed points of (\ref{Our_Dynamics}) are isolated. Let $\vec{p}(0)$ be a point in the interior of $\Delta$. It follows that $\lim_{t \to \infty} \vec{p}(t) = \vec{p}^{*}$ is a Nash equilibrium.
\end{theorem}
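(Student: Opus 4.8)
The plan is to combine the already-established convergence to fixed points (Theorem~\ref{cormain}) with two additional ingredients: a ``survival of the interior'' argument showing that no strategy probability can vanish in the limit when we start in the interior, and the observation that a fixed point with full support is automatically a Nash equilibrium. Concretely, since the fixed points are assumed isolated and $\Psi$ is a strict Lyapunov function off the fixed-point set, Theorem~\ref{cormain} already gives $\lim_{t\to\infty}\vec{p}(t)=\vec{p}^*$ for some fixed point $\vec{p}^*$ (a compact connected set of fixed points on which $\Psi$ is constant must be a single point when fixed points are isolated). So the real content is to rule out the possibility that $\vec{p}^*$ lies on the boundary in a ``bad'' way, i.e.\ that some strategy $\gamma$ with $c_{i\gamma}(\vec{p}^*)<\hat c_i(\vec{p}^*)$ receives zero probability. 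Recall from the Remark in Section~\ref{ourdynamic} that $\vec{p}^*$ is a Nash equilibrium precisely when, in addition to the equalization condition on the support, every unused strategy has cost at least $\hat c_i$; so it suffices to show this latter condition.

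First I would record the key monotonicity fact for individual coordinates: from \eqref{Our_Dynamics}, $p_{i\gamma}(t+1) = p_{i\gamma}(t)\,\frac{1-\epsilon_i c_{i\gamma}(t)}{1-\epsilon_i\hat c_i(t)}$, so the ratio $p_{i\gamma}(t+1)/p_{i\gamma}(t) > 1$ exactly when $c_{i\gamma}(t) < \hat c_i(t)$, i.e.\ strategies doing better than average have strictly increasing probability. Now suppose for contradiction that $p^*_{i\gamma}=0$ for some $i,\gamma$ with $c_{i\gamma}(\vec{p}^*) < \hat c_i(\vec{p}^*) = c_{i\delta}(\vec{p}^*)$ for the supported strategies $\delta$. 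By continuity of the costs $c_{i\gamma}(\vec{p})$ in $\vec{p}$ and since $\vec{p}(t)\to\vec{p}^*$, there is a time $T$ and a constant $\rho>1$ such that for all $t\ge T$ we have $\frac{1-\epsilon_i c_{i\gamma}(t)}{1-\epsilon_i\hat c_i(t)} \ge \rho$. Because $\vec{p}(0)$ is in the interior of $\Delta$ and the dynamics preserves positivity of every coordinate (the Remark in Section~\ref{ourdynamic}), $p_{i\gamma}(T)>0$; hence $p_{i\gamma}(t) \ge p_{i\gamma}(T)\,\rho^{\,t-T} \to \infty$, contradicting $p_{i\gamma}(t)\le 1$. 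Therefore every strategy $\gamma$ with $p^*_{i\gamma}=0$ must satisfy $c_{i\gamma}(\vec{p}^*) \ge \hat c_i(\vec{p}^*)$, and together with the equalization on the support this is exactly the Nash equilibrium condition.

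The main obstacle I anticipate is the clean passage from ``converges to the set of fixed points'' (Theorem~\ref{cormain}) to ``converges to a single fixed point'': this is where the isolatedness hypothesis is used, and one must argue that the $\omega$-limit set $\Omega$, being a compact, connected (since $\vec{p}(t+1)-\vec{p}(t)\to 0$ as the dynamics is continuous and $\Psi$ is Lyapunov) and invariant set consisting entirely of isolated fixed points, is necessarily a singleton. The second, more delicate point is guaranteeing the uniform bound $\rho>1$ on the multiplicative factor near $\vec{p}^*$: this needs continuity of $c_{i\gamma}$ and $\hat c_i$ together with the strict inequality $c_{i\gamma}(\vec{p}^*)<\hat c_i(\vec{p}^*)$, and the fact that $1-\epsilon_i\hat c_i(t)$ stays bounded away from $0$, which follows from the admissibility assumption $1/\epsilon_i > 1/\beta \ge \hat c_i$. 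Everything else is routine; the argument does not even need the Baum--Eagon machinery beyond what is already packaged in Theorem~\ref{cormain}.
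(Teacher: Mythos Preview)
Your proposal is correct and follows essentially the same approach as the paper: both argue by contradiction that if $p^*_{i\gamma}=0$ with $c_{i\gamma}(\vec p^*)<\hat c_i(\vec p^*)$, then by continuity the multiplicative factor exceeds $1$ for all large $t$, so $p_{i\gamma}(t)$ is eventually positive and nondecreasing, contradicting $p_{i\gamma}(t)\to 0$. The only cosmetic difference is that you extract a uniform $\rho>1$ and derive geometric blow-up, whereas the paper is content with the weaker ``positive and increasing, hence cannot tend to $0$''; your added discussion of why the isolated-fixed-points hypothesis forces the $\omega$-limit set to be a singleton is a welcome clarification that the paper leaves implicit.
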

\begin{proof} We showed in Theorem \ref{cormain} that $\text{MWU}_\ell$ dynamics (\ref{Our_Dynamics}) converges, hence $\lim_{t \to \infty} \vec{p}(t)$ exists (under the assumption that the fixed points are isolated) and is equal to a fixed point of the dynamics $\vec{p}^*$. Also it is clear from the dynamics that $\Delta$ is invariant, \ie $\sum_{\delta \in S_j}p_{j\delta}(t)=1$, $p_{j\delta}(t)>0$ for all $j$ and $t\geq 0$ since $\vec{p}(0)$ is in the interior of $\Delta$.

Assume that $\vec{p}^*$ is not a Nash equilibrium, then there exists a player $i$ and a strategy $\gamma \in S_i$ so that $c_{i\gamma}(\vec{p}^*)<\hat{c}_i(\vec{p}^*)$ (on mixed strategies $\vec{p}^*$) and $p^* _{i\gamma} =0$. Fix a $\zeta>0$ and let $U_{\zeta}=\{\vec{p}: c_{i\gamma}(\vec{p})<\hat{c}_i(\vec{p})-\zeta\}$. By continuity we have that $U_{\zeta}$ is open. It is also true that $\vec{p}^* \in U_{\zeta}$ for $\zeta$ small enough.

Since $\vec{p}(t)$ converges to $\vec{p}^*$ as $t\to \infty$, there exists a time $t_{0}$ so that for all $t' \geq t_0$ we have that $\vec{p}(t') \in U_{\zeta}$. However, from $\text{MWU}_\ell$ dynamics (\ref{Our_Dynamics}) we get that if $\vec{p}(t')\in  U_{\zeta}$ then $1-\epsilon_i c_{i\gamma}(t')>1-\epsilon_i \hat{c}_{i}(t')$ and hence $p_{i\gamma}(t'+1) = p_{i\gamma}(t') \frac{1-\epsilon_i c_{i\gamma}(t')}{1-\epsilon_i \hat{c}_{i}(t')} \geq p_{i\gamma}(t')>0$, \ie $p_{i\gamma}(t')$ is positive and increasing with $t' \geq t_0$. We reached a contradiction since $p_{i\gamma}(t) \to p^*_{i\gamma}=0$, thus $\vec{p}^*$ is a Nash equilibrium.
\end{proof}
\section{Non-Convergence of $\textrm{MWU}_{e}$: Limit Cycle and Chaos }\label{sec:nonconvergence}
We consider a symmetric two agent congestion game with two edges $e_1,e_2$. Both agents have the same two available strategies $\gamma_1 = \{e_1\}$ and $\gamma_2 = \{e_2\}$. We denote $x,y$ the probability that the first and the second agent respectively choose strategy $\gamma_1$.

For the first example, we assume that $c_{e_1} (l) = \frac{1}{2}\cdot l$ and $c_{e_2} (l) = \frac{1}{2}\cdot l$. Computing the expected costs we get that $c_{1\gamma_1} = \frac{1+y}{2}$, $c_{1\gamma_2} = \frac{2-y}{2} $, $c_{2\gamma_1} = \frac{1+x}{2}$, $c_{2\gamma_2} = \frac{2-x}{2}$. $\text{MWU}_e$ then becomes $x_{t+1} = x_{t} \frac{(1 - \epsilon_1)^{\frac{(y_{t}+1)}{2}}}{x_t(1 - \epsilon_1)^{\frac{y_{t}+1}{2}}+(1-x_t)(1 - \epsilon_1)^{\frac{2-y_{t}}{2}}}$ (first player) and $y_{t+1} = y_{t} \frac{(1 - \epsilon_2)^{\frac{x_{t}+1}{2}}}{y_t(1 - \epsilon_2)^{\frac{x_{t}+1}{2}}+(1-y_t)(1 - \epsilon_2)^{\frac{2-x_{t}}{2}}}$ (second player). We assume that $\epsilon_1 = \epsilon_2$ and also that $x_0 = y_0$ (players start with the same mixed strategy. Due to symmetry, it follows that $x_t = y_t$ for all $t \in\mathbb{N}$, thus it suffices to keep track only of one variable (we have reduced the number of variables of the update rule of the dynamics to one) and the dynamics becomes $x_{t+1} = x_{t} \frac{(1 - \epsilon)^{\frac{x_{t}+1}{2}}}{x_t(1 - \epsilon)^{\frac{x_{t}+1}{2}}+(1-x_t)(1 - \epsilon)^{\frac{2-x_{t}}{2}}}$. Finally, we choose $\epsilon = 1 - e^{-10}$ and we get $$x_{t+1} = H(x_t) = x_{t} \frac{e^{-5(x_{t}+1)}}{x_te^{-5(x_{t}+1)}+(1-x_t)e^{-5(2-x_{t})}},$$ i.e., we denote $H(x) =  \frac{xe^{-5(x+1)}}{xe^{-5(x+1)}+(1-x)e^{-5(2-x)}}$.

For the second example, we assume that $c_{e_1} (l) = \frac{1}{4}\cdot l$ and $c_{e_2} (l) = \frac{1.4}{4}\cdot l$. Computing the expected costs we get that $c_{1\gamma_1} = \frac{1+y}{4}$, $c_{1\gamma_2} = \frac{1.4(2-y)}{4} $, $c_{2\gamma_1} = \frac{1+x}{4}$, $c_{2\gamma_2} = \frac{1.4(2-x)}{4}$. $\text{MWU}_e$ then becomes $x_{t+1} = x_{t} \frac{(1 - \epsilon_1)^{\frac{(y_{t}+1)}{4}}}{x_t(1 - \epsilon_1)^{\frac{y_{t}+1}{4}}+(1-x_t)(1 - \epsilon_1)^{\frac{1.4(2-y_{t})}{4}}}$ (first player) and $y_{t+1} = y_{t} \frac{(1 - \epsilon_2)^{\frac{x_{t}+1}{4}}}{y_t(1 - \epsilon_2)^{\frac{x_{t}+1}{4}}+(1-y_t)(1 - \epsilon_2)^{\frac{1.4(2-x_{t})}{4}}}$ (second player). We assume that $\epsilon_1 = \epsilon_2$ and also that $x_0 = y_0$ (players start with the same mixed strategy. Similarly, due to symmetry, it follows that $x_t = y_t$ for all $t \in\mathbb{N}$, thus it suffices to keep track only of one variable  and the dynamics becomes $x_{t+1} = x_{t} \frac{(1 - \epsilon)^{\frac{x_{t}+1}{4}}}{x_t(1 - \epsilon)^{\frac{x_{t}+1}{4}}+(1-x_t)(1 - \epsilon)^{\frac{1.4(2-x_{t})}{4}}}$. Finally, we choose $\epsilon = 1 - e^{-40}$ and we get $$x_{t+1} = G(x_t) = x_{t} \frac{e^{-10(x_{t}+1)}}{x_te^{-10(x_{t}+1)}+(1-x_t)e^{-14(2-x_{t})}},$$ i.e., we denote $G(x) =  \frac{xe^{-10(x+1)}}{xe^{-10(x+1)}+(1-x)e^{-14(2-x)}}$.
\subsection{Analyzing $x_{t+1} = H(x_t)$}
\label{sec:limitcycle}

\input{H_expon_bubbles}

\subsubsection{The signs of the derivative of \;$H(H(x))$}
\smallskip
In this subsection we analyze the monotonicity of $H(H(x))$.
\begin{lemma}\label{lem:monotone}
There exist numbers $0<y_0<x_0<1/2<x_1<y_1<1$ so that:
\begin{itemize}
\item For $x \in [0,y_0], [x_0,x_1] $ and $[y_1,1]$ $H(H(x))$ is strictly increasing,
\item for $x \in [y_0,x_0]$ and $x \in [x_1,y_1]$ $H(H(x))$ is strictly decreasing,
\end{itemize}
where $x_0 = \frac{1}{10}(5-\sqrt{15}) \approx 0.1127$, $x_1 = \frac{1}{10}(5+\sqrt{15}) \approx 0.8873$, $y_0 \in (0,x_0)$ so that $H(y_0) = x_0$ and $y_1 \in (x_1,1)$ so that $H(y_1) = x_1$.
\end{lemma}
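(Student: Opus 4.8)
The plan is to reduce everything to the sign of $H'$ via the chain rule $(H\circ H)'(x) = H'(H(x))\cdot H'(x)$, plus a careful bookkeeping of where $H$ sends each subinterval. First I would put $H$ into a cleaner normal form: dividing numerator and denominator of $H(x)=\frac{xe^{-5(x+1)}}{xe^{-5(x+1)}+(1-x)e^{-5(2-x)}}$ by $e^{-5(x+1)}$ gives $H(x) = \frac{x}{x + (1-x)e^{10x-5}}$. From this one reads off $H(0)=0$, $H(1)=1$, $H(\tfrac12)=\tfrac12$, and (dividing instead by $e^{10x-5}$ in the expression for $1-H(x)$) the reflection identity $H(1-x) = 1 - H(x)$, which will halve the work since $(H\circ H)(1-x) = 1-(H\circ H)(x)$. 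A direct differentiation of the normal form yields $H'(x) = \dfrac{e^{10x-5}\,\bigl(1-10x(1-x)\bigr)}{\bigl(x+(1-x)e^{10x-5}\bigr)^{2}}$, so $\operatorname{sign} H'(x) = \operatorname{sign}(10x^{2}-10x+1)$; the roots of this quadratic are exactly $x_{0}=\tfrac1{10}(5-\sqrt{15})$ and $x_{1}=\tfrac1{10}(5+\sqrt{15})=1-x_{0}$, so $H$ is strictly increasing on $[0,x_{0}]$, strictly decreasing on $[x_{0},x_{1}]$, and strictly increasing on $[x_{1},1]$, with $H'$ vanishing only at $x_{0},x_{1}$.

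Next I would define $y_{0},y_{1}$ and locate the relevant images. Note $H(x)>x$ on $(0,\tfrac12)$, which reduces to $e^{10x-5}<1$; in particular $H(x_{0})>x_{0}$, so since $H$ is continuous and strictly increasing on $[0,x_{0}]$ with $H(0)=0$, the intermediate value theorem gives a unique $y_{0}\in(0,x_{0})$ with $H(y_{0})=x_{0}$, and I set $y_{1}:=1-y_{0}$, so that $H(y_{1})=1-H(y_{0})=x_{1}$ and $y_{1}\in(x_{1},1)$. The quantitative input needed is the bound $H(x_{0})\le x_{1}$; using $10x_{0}-5=-\sqrt{15}$ and $x_{1}=1-x_{0}$ one checks this is equivalent to $(x_{0}/x_{1})^{2}\le e^{-\sqrt{15}}$, and since $x_{0}/x_{1}=4-\sqrt{15}$, equivalent to $4-\sqrt{15}\le e^{-\sqrt{15}/2}$, which I would verify by an explicit numerical estimate. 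Granting it, monotonicity of $H$ on each piece gives $H([0,y_{0}])=[0,x_{0}]$, $H([y_{0},x_{0}])=[x_{0},H(x_{0})]\subseteq[x_{0},x_{1}]$, and $H([x_{0},\tfrac12])=[\tfrac12,H(x_{0})]\subseteq[x_{0},x_{1}]$.

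Now I assemble the signs. For $x$ in the interior of each subinterval I multiply $\operatorname{sign}H'(x)$ by $\operatorname{sign}H'(H(x))$: on $(0,y_{0})$ both factors are positive; on $(y_{0},x_{0})$ we have $H'(x)>0$ while $H(x)\in(x_{0},x_{1})$ forces $H'(H(x))<0$; on $(x_{0},\tfrac12)$ both factors are negative (as $H(x)\in(x_{0},x_{1})$ again). Hence $(H\circ H)'$ is respectively $>0$, $<0$, $>0$ on these three intervals; it vanishes only at the isolated points $y_{0}$ and $x_{0}$ (where $H'$ or $H'\circ H$ hits a zero), which does not disturb strict monotonicity. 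This gives the claimed behavior on $[0,y_{0}]$, $[y_{0},x_{0}]$, $[x_{0},\tfrac12]$; reflecting through $x\mapsto 1-x$ gives the mirror statements on $[\tfrac12,x_{1}]$, $[x_{1},y_{1}]$, $[y_{1},1]$, and concatenating $[x_{0},\tfrac12]$ with $[\tfrac12,x_{1}]$ yields strict increase on all of $[x_{0},x_{1}]$, which is the statement.

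The main obstacle is the inequality $H(x_{0})\le x_{1}$, i.e.\ $4-\sqrt{15}\le e^{-\sqrt{15}/2}$: numerically the two sides are about $0.127$ and $0.144$, close enough that a sloppy estimate fails, so one must bound $\sqrt{15}$ tightly and control $e^{-\sqrt{15}/2}$ (for instance via a log comparison or a truncated series) to get a rigorous proof. The rest is a routine, if bookkeeping-heavy, sign chase, the only other subtlety being the reminder that $(H\circ H)'$ has isolated zeros at $y_{0},x_{0},x_{1},y_{1}$ and that strict monotonicity survives them.
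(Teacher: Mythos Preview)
Your proposal is correct and follows essentially the same approach as the paper: apply the chain rule $(H\circ H)'(x)=H'(H(x))\,H'(x)$, determine the sign of $H'$ from the quadratic $10x^{2}-10x+1$, locate $y_0,y_1$ via the intermediate value theorem, and then do a case-by-case sign analysis according to where $H$ sends each subinterval. The paper treats all five intervals directly (in particular handling $(x_0,x_1)$ in one shot by noting $(H(x_1),H(x_0))\subset(x_0,x_1)$), whereas you exploit the reflection identity $H(1-x)=1-H(x)$ to work only on $[0,\tfrac12]$ and mirror; and where the paper simply cites the numerical value $H(x_0)\approx0.8593<x_1$, you reduce the same fact to the explicit inequality $4-\sqrt{15}\le e^{-\sqrt{15}/2}$. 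These are cosmetic differences, not a different argument.
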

\begin{proof}First of all it holds that $\frac{d H(H(x))}{dx} = H'(H(x))\cdot H'(x)$, therefore we will analyze the signs of $H'(H(x))$ and $H'(x)$ separately. Direct calculations give $H'(x) = e^{5 + 10 x} \frac{1 - 10 x + 10 x^2}{(e^{10 x} (-1 + x) - e^5 x)^2}$. The roots of $1 - 10 x + 10 x^2$ are
$x_0$ and $x_1$ (defined in the statement). We conclude that $H$ is strictly increasing in $[0,x_0]$ and $[x_1,1]$ and strictly decreasing in $[x_0,x_1]$.

Moreover $H(x_0)\approx 0.8593>x_0$ thus lies in $(1/2,x_1)$ and $H(x_1) \approx 0.1406<x_1$ and hence lies in $(x_0,1/2)$. Let $y_0 \in (0,x_0)$ so that $H(y_0) = x_0$ (since $H$ is strictly increasing in $[0,x_0]$, $H(0) =0$ and $H(x_0)>x_0$, there exists a unique $y_0$) and by similar argument let $y_1$ the unique real in $[x_1,1]$ so that $H(y_1) = x_1$.

We have the following cases:
\begin{itemize}
\item For $x \in (0,y_0)$ we get that both $H'(x)$ and $H'(H(x))$ are positive and hence $H(H(x))$ is strictly increasing in $[0,y_0]$ (area 1 of the figure \ref{fig:M1}).
\item For $x \in (y_0,x_0)$ we get that $H'(x)$ is positive and $H'(H(x))$ is negative, thus $H(H(x))$ strictly decreasing in $[y_0,x_0]$ (area 2 of the figure \ref{fig:M1}).
\item For $x\in (x_0,x_1)$ we get that $H'$ is negative and since $(H(x_1),H(x_0))\subset (x_0,x_1)$, H is monotone we have that $H'(H(x))$ is also negative, namely $H(H(x))$ is strictly increasing in $[x_0,x_1]$ (areas 3,4,5 and 6 of the figure \ref{fig:M1}).
\item For $x \in (x_1,y_1)$ we get that $H'(x)$ is positive and $H'(H(x))$ is negative and hence $H(H(x))$ is strictly decreasing in $[x_1,y_1]$ (area 7 of the figure \ref{fig:M1}).
\item For $x \in (y_1,1)$ we get that $H'(x)$ is positive and $H'(H(x))$ is positive, thus $H(H(x))$ strictly increasing in $[y_1,1]$ (area 8 of the figure \ref{fig:M1}).
\end{itemize}
\end{proof}
\subsubsection{The fixed points of $H(H(x))$}
\begin{lemma}\label{lem:fixedpoints}
$H(H(x))$ has 5 fixed points, $0<\rho_1<1/2<\rho_2 = 1-\rho_1<1$. Moreover $H(H(x)) - x$ is positive in $(0,\rho_1)$, $(1/2,\rho_2)$ and negative in $(\rho_1,1/2)$, $(\rho_2,1)$.
\end{lemma}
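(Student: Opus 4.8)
The plan is to determine the complete solution set of $H(H(x))=x$ exactly, and then recover the sign of $F(x)\defeq H(H(x))-x$ from the derivative of $H\circ H$ at two of its fixed points together with the reflection symmetry. That symmetry is $H(1-x)=1-H(x)$, which is immediate from the formula for $H$ (the denominator $xe^{-5(x+1)}+(1-x)e^{-5(2-x)}$ is symmetric in its two summands). Clearing denominators in $H(x)=x$ gives $x(1-x)\bigl(e^{-5(x+1)}-e^{-5(2-x)}\bigr)=0$, so $H$ has exactly the three fixed points $0,\tfrac12,1$; these are automatically fixed points of $H\circ H$.

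\textbf{Every genuine $2$-cycle is reflection-symmetric.} Suppose $a,b\in(0,1)$, $a\neq b$, form a $2$-cycle, i.e.\ $H(a)=b$ and $H(b)=a$. Clearing denominators in $H(a)=b$ yields $a(1-b)e^{-5(a+1)}=b(1-a)e^{-5(2-a)}$, hence $\tfrac{a(1-b)}{b(1-a)}=e^{5(2a-1)}$, and symmetrically $\tfrac{b(1-a)}{a(1-b)}=e^{5(2b-1)}$. Multiplying these two identities gives $1=e^{10(a+b-1)}$, so $a+b=1$. Substituting $b=1-a$ back in forces $a^2e^{-5(a+1)}=(1-a)^2e^{-5(2-a)}$, equivalently $g(a)=0$ where $g(a)\defeq\ln\frac{a}{1-a}-5\bigl(a-\tfrac12\bigr)$ on $(0,1)$. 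Conversely, a short calculation shows that $g(a)=0$ implies $H(a)=1-a$ and hence $H(H(a))=H(1-a)=1-H(a)=a$. Therefore the solutions of $H(H(x))=x$ in $(0,1)$ are precisely the zeros of $g$ (the value $a=\tfrac12$ among them is the fixed point of $H$; any zero $a\neq\tfrac12$ gives a genuine $2$-cycle $\{a,1-a\}$).

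\textbf{Counting zeros of $g$ and the signs of $F$.} We have $g(\tfrac12)=0$, $g(0^+)=-\infty$, $g(1^-)=+\infty$, and $g'(a)=\tfrac1{a(1-a)}-5$, which is positive on $(0,a_-)\cup(a_+,1)$ and negative on $(a_-,a_+)$, where $a_\pm=\tfrac12\pm\tfrac1{2\sqrt5}$ solve $a(1-a)=\tfrac15$. Since $a_-<\tfrac12<a_+$ and $g$ is strictly decreasing on $(a_-,a_+)$ with $g(\tfrac12)=0$, we get $g(a_-)>0>g(a_+)$; hence $g$ has exactly one zero $\rho_1\in(0,a_-)$, exactly one in $(a_-,a_+)$ (namely $\tfrac12$), and exactly one zero $\rho_2\in(a_+,1)$, while the antisymmetry $g(1-a)=-g(a)$ forces $\rho_2=1-\rho_1$. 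Combining with the previous step, $H\circ H$ has precisely the five fixed points $0<\rho_1<\tfrac12<\rho_2=1-\rho_1<1$, so $F$ has constant sign on each of $(0,\rho_1),(\rho_1,\tfrac12),(\tfrac12,\rho_2),(\rho_2,1)$. Using $H'(x)=e^{5+10x}\tfrac{1-10x+10x^2}{(e^{10x}(x-1)-e^5x)^2}$ one gets $H'(0)=e^5$ and $H'(\tfrac12)=-\tfrac32$, so $F'(0)=H'(0)^2-1=e^{10}-1>0$ and $F'(\tfrac12)=H'(\tfrac12)^2-1=\tfrac54>0$; thus $F>0$ just right of $0$ and $F<0$ just left of $\tfrac12$, giving $F>0$ on $(0,\rho_1)$ and $F<0$ on $(\rho_1,\tfrac12)$. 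Finally $F(1-x)=\bigl(1-H(H(x))\bigr)-(1-x)=-F(x)$ maps $(\rho_1,\tfrac12)$ onto $(\tfrac12,\rho_2)$ and $(0,\rho_1)$ onto $(\rho_2,1)$ with a sign flip, yielding $F>0$ on $(\tfrac12,\rho_2)$ and $F<0$ on $(\rho_2,1)$.

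\textbf{Where the difficulty sits.} The only genuinely delicate point is excluding $2$-cycles $\{a,b\}$ with $b\neq 1-a$: such a cycle would contribute extra fixed points of $H\circ H$ and destroy the clean count, and the monotonicity description of $H\circ H$ from Lemma~\ref{lem:monotone} by itself does not rule them out on the increasing branch $[x_0,x_1]$, where three of the fixed points already live. The identity obtained by multiplying the two cyclic equations (forcing $a+b=1$) is exactly what makes the count tight; after that, everything reduces to the one-variable analysis of $g$ and a derivative check at $0$ and $\tfrac12$, with the reflection $x\mapsto 1-x$ doing the rest.
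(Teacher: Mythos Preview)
Your proof is correct and reaches the same target equation as the paper, but by a genuinely different and cleaner route. The paper computes $H(H(x))$ explicitly, simplifies the fixed-point equation through several algebraic steps, and lands on $g_2(x)=x-e^{-2.5+5x}(1-x)=0$ (equivalently your $g(x)=0$), then analyzes $g_2$ via its first and second derivatives together with Bolzano's and Rolle's theorems. You instead exploit the cyclic structure: multiplying the two relations $\tfrac{a(1-b)}{b(1-a)}=e^{5(2a-1)}$ and $\tfrac{b(1-a)}{a(1-b)}=e^{5(2b-1)}$ to force $a+b=1$ sidesteps the explicit computation of $H\circ H$ entirely and reduces the problem to the single-variable analysis of $g(a)=\ln\tfrac{a}{1-a}-5(a-\tfrac12)$, whose derivative $\tfrac{1}{a(1-a)}-5$ is simpler than the paper's $g_2'$. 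Your sign analysis is also more explicit: the paper closes with ``follows by same arguments,'' whereas you pin down the sign of $F$ via $F'(0)=e^{10}-1>0$ and $F'(\tfrac12)=\tfrac54>0$ and then invoke the antisymmetry $F(1-x)=-F(x)$. The paper's approach buys a concrete localization $\rho_1\in(x_0,\tfrac14)$ that is useful downstream in Theorem~\ref{thm:cycle}; your approach buys a structural explanation of \emph{why} no asymmetric $2$-cycles exist, which the paper's direct computation obscures.
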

\begin{proof}
By direct calculations we get that
\begin{align*}
H(H(x)) &= \frac{x}{\left(e^{-5+10x}(1-x)+x\right)\left(\frac{x}{e^{-5+10x}(1-x)+x}+e^{-5+\frac{10x}{e^{-5+10x}(1-x)+x}}\left(1-\frac{x}{e^{-5+10x}(1-x)+x}\right)\right)}
\\& = \frac{x}{x+e^{10x\left(1+\frac{1}{e^{-5+10x}(1-x)+x}\right)-10}(1-x)}
\end{align*}
It is clear that $H(H(0))=0, H(H(1))=1$ and $H(H(1/2)) = 1/2$. In order to find the other fixed points, it suffices to analyze the roots of the function $1-x-e^{10x\left(1+\frac{1}{e^{-5+10x}(1-x)+x}\right)-10}(1-x)$. By cancelling the common factor $(1-x)$ (we have already take into account $x=1$), we have to analyze
$g(x) \defeq 1-e^{10x\left(1+\frac{1}{e^{-5+10x}(1-x)+x}\right)-10}$. It follows by the monotonicity of $e^x$ that $g(x)=0$ iff $10x\left(1+\frac{1}{e^{-5+10x}(1-x)+x}\right)-10=0$, i.e., $\frac{x}{e^{-5+10x}(1-x)+x}=1-x$.

To solve the equation above, it suffices to analyze the roots of the function $$g_1(x) \defeq x-(1-x)\left(e^{-5+10x}(1-x)+x \right) = x^2-e^{-5+10x}(1-x)^2.$$
By direct calculation we have to find the roots of $g_2(x) \defeq x - e^{-2.5+5x}(1-x)$ (since $0 \leq x \leq 1$). Finally, we take the derivative of $g_2$ which is $g'_2(x) = 1 + e^{-2.5+5x} - 5e^{-2.5+5x}(1-x) = 1+ e^{-2.5+5x}(5x-4)$. Clearly $g''_2(x)$ is negative in $[0,3/5)$, positive in $(3/5,1]$ and zero at $3/5$.
Also $g'_2(0)\approx 0.67>0, g'_2(3/5) \approx -0.648<0$ and $g'_2(1) >0$, i.e., by Bolzano's theorem $g'_2(x)$ has a unique root in $(0,3/5)$ (say $\alpha_1$) and a unique root in $(3/5,1)$ (say $\alpha_2$). Finally, since $g'_2(1/2) = -0.5 <0$ and $g'_2(x_0) \approx 0.504>0$, it follows that $x_0<\alpha_1 <1/2$ and since $g'_2(x_1) \approx 4.026$ we get that $1/2<\alpha_2 < x_1$. By the above and Rolle's theorem we conclude that $H(H(x))$ has at most 3 distinct fixed points apart from $0,1$. Since $g_2$ is increasing in $(0,x_0)$ and $g_2(x_0) \approx -0.015<0$, $g_2$ has no root in $(0,x_0]$.
Moreover, since $g_2(1/4) \approx 0.035 >0$, it follows that $g_2$ has a root in $(x_0,1/4)$ (say $\rho_1$). Hence $H(H(\rho_1)) = \rho_1$ and $1/2>1/4>\rho_1>x_0$. By observing that $H(1-x) = 1 - H(x)$, we get that $H(1- H(x)) = 1 - H(H(x))$ and also $H(H(1-x)) = H(1 - H(x))$, i.e.,
$$H(H(1-x)) = 1 - H(H(x)). $$ We substitute $x$ with $\rho_1$ and we get $H(H(1-\rho_1)) = 1 - H(H(\rho_1)) = 1 - \rho_1$, namely $\rho_2 \defeq 1 - \rho_1 >3/4$ is the remaining fixed point of $H(H(x))$. Whether $H(H(x))-x$ is positive or negative follows by same arguments.
See also the figure \ref{fig:M1} for a visualization of this theorem.
\end{proof}

\begin{figure}[t]
\begin{minipage}{1.0\textwidth}
\centering     
\subfigure[Exponential $\textrm{MWU}_{e}$:
Plot of function $H$ (blue) and its iterated versions $H^{2}$ (red), $H^{3}$ (yellow). Function $y(x)=x$ is also included.]{\label{fig:Hexp1} \includegraphics[scale=0.30]{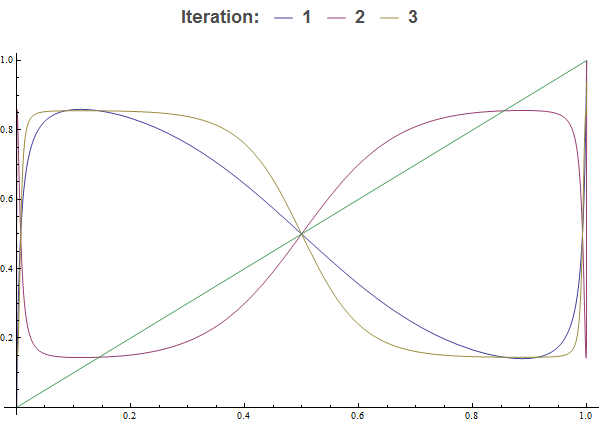}}
\;
\subfigure[Linear $\text{MWU}_\ell$:
 Plot of function $H_\ell$ (blue) and its iterated versions $H_\ell^{2}$ (red) and $H_\ell^{3}$ (yellow).
  Function $y(x)=x$ is also included.]{\label{fig:Hlin1}
  \includegraphics[scale=0.30]{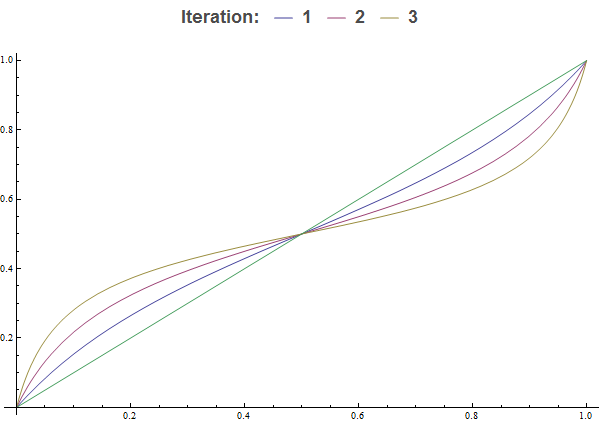}}
\end{minipage}
\end{figure}

\begin{figure}[t]
\begin{minipage}{1.0\textwidth}
\centering     
\subfigure[Exponential $\textrm{MWU}_{e}$: 
Plot of function  $H^{10}$. Function $y(x)=x$ is also included.]{\label{fig:Hexp2}\includegraphics[scale = 0.30]{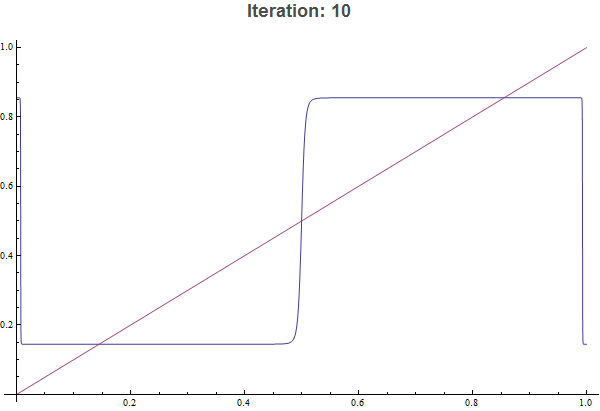}}
\;
\subfigure[Linear $\text{MWU}_\ell$:
 Plot of function  $H_\ell^{10}$. Function $y(x)=x$ is also included.]{  \label{fig:Hlin2}\includegraphics[scale = 0.30]{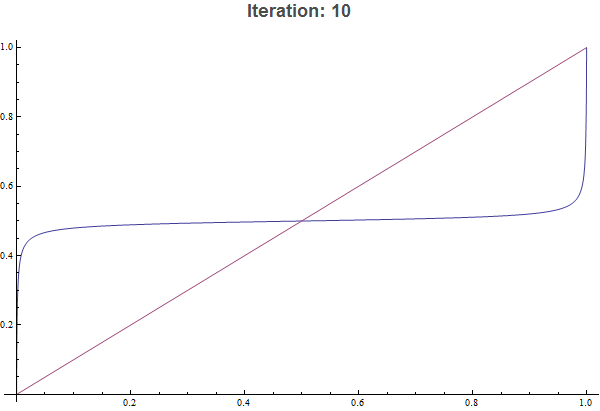}}
\end{minipage}
\caption{We compare and contrast $\textrm{MWU}_{e}$ (left) and  $\textrm{MWU}_{\ell}$ (right) in the same two agent two strategy/edges congestion game with  $c_{e_1} (l) = \frac{1}{2}\cdot l$ and $c_{e_2} (l) = \frac{1}{2}\cdot l$ and same learning rate $\epsilon=1-e^{-10}$. $\textrm{MWU}_{e}$ converges to a limit cycle whereas $\textrm{MWU}_{\ell}$ equilibrates.
Function $y(x)=x$ is also included in the graphs to help identify fixed points and periodic points.}
\end{figure}

\subsubsection{Periodic orbits}
\begin{theorem}
\label{thm:cycle}
 For all but a measure zero set $S$ of $x \in (0,1)$ we get that $\lim_{t \to \infty} H^{2t}(x) = \rho_1$ or $\rho_2$. Moreover, $H(\rho_1) = \rho_2$ and $H(\rho_2) = \rho_1$, i.e., $\{\rho_1,\rho_2\}$ is a periodic orbit. Thus, all but a measure zero set $S$ of initial conditions converge to the limit cycle $\{\rho_1,\rho_2\}$. Finally, the initial points in $S$ converge to the equilibrium $\frac{1}{2}$.
\end{theorem}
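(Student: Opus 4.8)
\emph{Step 1: setup and a forward-invariant interval.} The plan is to reduce the even iterates $H^{2t}$ to a one-dimensional monotone self-map on a forward-invariant subinterval, obtain convergence there by the classical monotone-orbit argument, and then show that every orbit started in $(0,1)$ is eventually absorbed into that subinterval. Write $g\defeq H\circ H=H^2$ and $I\defeq[x_0,x_1]$; recall from the definition of $H$ that $H(0)=0$, $H(\tfrac12)=\tfrac12$, $H(1)=1$, $H(1-x)=1-H(x)$ (so $g(1-x)=1-g(x)$), and $x_1=1-x_0$, so the whole picture is symmetric about $\tfrac12$. From Lemma~\ref{lem:monotone} and its proof, $H$ is strictly decreasing on $I$ with $H(x_0)\in(\tfrac12,x_1)$ and $H(x_1)\in(x_0,\tfrac12)$, hence $H(I)=[H(x_1),H(x_0)]\subseteq I$, so $I$ is forward invariant under $H$ and under $g$. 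On $I$, $g$ is strictly increasing (Lemma~\ref{lem:monotone}), and by Lemma~\ref{lem:fixedpoints} (with $x_0<\rho_1<\rho_2<x_1$ from its proof) the fixed points of $g$ in $I$ are exactly $\rho_1<\tfrac12<\rho_2$, while $g(x)-x>0$ on $(x_0,\rho_1)\cup(\tfrac12,\rho_2)$ and $g(x)-x<0$ on $(\rho_1,\tfrac12)\cup(\rho_2,x_1)$.

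\emph{Step 2: convergence inside $I$.} Since $g|_I$ is increasing with $g(\tfrac12)=\tfrac12$, the interval $[x_0,\tfrac12)$ is forward invariant; for $x$ in it the sign pattern above makes $(g^n(x))_n$ monotone (increasing toward $\rho_1$ on $[x_0,\rho_1]$, decreasing toward $\rho_1$ on $[\rho_1,\tfrac12)$) and bounded, hence convergent to the unique fixed point of $g$ in $[x_0,\tfrac12)$, namely $\rho_1$. By the $x\mapsto 1-x$ symmetry, $g^n(x)\to\rho_2$ for $x\in(\tfrac12,x_1]$ and $g^n(\tfrac12)=\tfrac12$. In particular, among points of $I$ only $\tfrac12$ has $g$-orbit converging to $\tfrac12$ (using also $g(x)=\tfrac12\Rightarrow x=\tfrac12$ on $I$, since $g|_I$ is strictly increasing).

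\emph{Step 3 (the crux): every orbit in $(0,1)$ enters $I$.} By symmetry it is enough to treat $x\in(0,x_0]$. I would use that $H$ is strictly increasing on $[0,x_0]$ with $H(y_0)=x_0$, that the fixed points of $H$ on $[0,1]$ are exactly $0,\tfrac12,1$ (a direct computation, after cancelling the factor $1-x$), and that $H'(0)=e^5>1$; these together force $H(z)>z$ throughout $(0,\tfrac12)$. Then for $x\in(0,y_0)$ the orbit strictly increases while it stays in $(0,y_0)$, and it cannot stay there forever: otherwise it would converge to some $L\in(0,y_0]$ which, by continuity, would be a fixed point of $H$ in $(0,y_0]$ (there is none) or would force the contradiction $H(y_0)=y_0$. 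Hence some iterate lands in $[y_0,x_0)$, and one more application of $H$ — increasing on $[y_0,x_0]$ with $H([y_0,x_0))=[x_0,H(x_0))\subseteq I$ — puts the orbit into $I$, where it stays. For $x\in[y_0,x_0]$ this last step applies immediately. Thus for every $x\in(0,1)$ we have $H^t(x)\in I$ for all large $t$, so for large $t$ the sequence $H^{2t}(x)=g^t(x)$ is a $g$-orbit in $I$ and, by Step~2, converges to one of $\rho_1,\tfrac12,\rho_2$. This absorption step — excluding orbits trapped near the repelling endpoints $0,1$, together with verifying the invariance of $I$ — is the part I expect to require the most care.

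\emph{Step 4: the exceptional set, the $2$-cycle, and the conclusion.} Let $S\defeq\{x\in(0,1):\lim_t H^{2t}(x)=\tfrac12\}$. By Steps~2–3, $x\in S$ forces $H^{2t_0}(x)=\tfrac12$ for some $t_0$, hence $S=\bigcup_{m\ge0}H^{-m}(\tfrac12)$. Each $H^{-m}(\tfrac12)$ is the zero set of the nonconstant real-analytic function $H^m-\tfrac12$ on the compact interval $[0,1]$ ($H$ is real-analytic there because the denominator in its definition is strictly positive on $[0,1]$; and $H^m-\tfrac12$ equals $-\tfrac12$ at $0$), hence finite; so $S$ is countable and Lebesgue-null, and every $x\notin S$ satisfies $\lim_t H^{2t}(x)\in\{\rho_1,\rho_2\}$. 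Finally, $\{\rho_1,\rho_2\}$ is a period-$2$ orbit of $H$: applying $H$ to $H^2(\rho_1)=\rho_1$ shows $H(\rho_1)$ is a fixed point of $H^2$, and being none of $0,\tfrac12,1,\rho_1$ it must be $\rho_2$; then $H(\rho_2)=1-H(\rho_1)=\rho_1$. Therefore, if $H^{2t}(x)\to\rho_1$ (resp.\ $\rho_2$), then $H^{2t+1}(x)=H(H^{2t}(x))\to\rho_2$ (resp.\ $\rho_1$), so the full orbit of $x$ converges to the limit cycle $\{\rho_1,\rho_2\}$; and for $x\in S$ the orbit is eventually equal to $\tfrac12$, which yields all the assertions.
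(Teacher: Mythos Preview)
Your proof is correct and follows essentially the same architecture as the paper's: establish a forward-invariant interval on which $g=H^2$ is monotone, use the sign pattern of $g(x)-x$ from Lemma~\ref{lem:fixedpoints} to get monotone convergence of $g$-orbits to $\rho_1$ or $\rho_2$ there, and then argue absorption of every orbit into that interval. Your variations---working with $H$-orbits (rather than $g$-orbits) for the absorption step via the observation $H(z)>z$ on $(0,\tfrac12)$, invoking real-analyticity of $H$ to get that each $H^{-m}(\tfrac12)$ is finite (instead of the paper's monotonicity-based count), and explicitly deducing $H(\rho_1)=\rho_2$ from the list of fixed points of $H^2$---are all clean and arguably tidier than the paper's case-by-case treatment, but they do not constitute a genuinely different route.
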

\begin{proof} Since $(\rho_1,1/2) \subset [x_0,x_1]$, from Lemma \ref{lem:monotone} it holds that $H(H(x))$ is strictly increasing in $(\rho_1,1/2)$. Thus if $\rho_1 < x< 1/2$, it follows $\rho_1 = H(H(\rho_1)) < H(H(x)) < H(H(1/2)) = 1/2$, i.e., the interval $[\rho_1,1/2]$ is invariant under $H\circ H$. Consider an initial condition $z_0 \in (\rho_1, 1/2)$ and define the sequence $z_{i+1} = H(H(z_{i})$. It is clear that $z_i \in (\rho_1,1/2)$ for all $i \in \mathbb{N}$ from previous argument. Additionally, $(z_i)_{i \in \mathbb{N}}$ is strictly decreasing because $z_{i+1} = H(H(z_{i})) < z_i$ (from Lemma \ref{lem:fixedpoints} we have $H(H(x))< x$ for all $x \in (\rho_1,1/2)$). Finally, $z_i > \rho_1$ for all $i \in \mathbb{N}$ (lower bounded), and thus the sequence converges to some limit $l$. It is easy to see that $\rho_1 \leq l < 1/2$ and also $H(H(l))=l$ by continuity of $H$, namely $l = \rho_1$ (using Lemma \ref{lem:fixedpoints}). Therefore, we showed that for any initial point $z_0 \in [\rho_1,1/2)$, we get that $\lim _{t \to \infty}H^{2t}(z_0) = \rho_1$. Analogously holds that for any initial point $z_0 \in (1/2,\rho_2]$, we get that $\lim _{t \to \infty}H^{2t}(z_0) = \rho_2$. It is clear that $\lim _{t \to \infty}H^{2t}(1/2) = 1/2$ ($1/2$ is a fixed point of $H$).

Moreover a point $z \in (x_0,\rho_1)$ we have that $z' = H(H(z)) \in (HH(x_0),HH(\rho_1))$ ($H\circ H$ is strictly increasing by Lemma \ref{lem:monotone}). Since $z< \rho_1$, we have that $z' = H(H(z)) > z$ (from Lemma \ref{lem:fixedpoints}). Therefore for any initial point $z_0 \in (x_0,\rho_1)$, the sequence $(H^{2t}(z_0))_{t \in \mathbb{N}}$ is strictly increasing and bounded by $\rho_1$, hence it converges. By similar argument as before we conclude that $\lim _{t \to \infty}H^{2t}(z_0)=\rho_1$. Analogously, it holds for any initial point $z_0 \in (\rho_2,x_1)$ that $\lim _{t \to \infty}H^{2t}(z_0)=\rho_1$.

We continue by considering the case that $z \in (y_0,x_0)$. From Lemma \ref{lem:monotone} we have that $z' = H(H(z)) \in (H(H(x_0)),H(H(y_0)))$. From Lemma \ref{lem:fixedpoints} $H(H(x_0)) >x_0 $ and $H(H(y_0)) = H(x_0) <x_1$. Therefore $z' \in (x_0,x_1)$ and from the previous cases we have that $\lim _{t \to \infty} H^{2t} (z) = \rho_1$ or $\rho_2$, unless $z' = 1/2$, i.e., unless $H(H(z))=1/2$. It is completely analogous the case $z \in (x_1,y_1)$.

To finish the proof, assume $z_0 \in (0,y_0)$. From Lemma \ref{lem:monotone} is holds that $z_1 = H(H(z_0))>z_0$. Let $n$ be the minimum index for $t$ so that $z_n = H^{2n}(z_0) > y_0$ ($n$ exists and is finite, otherwise the sequence $(H^{2t})_{t \in\mathbb{N}}$ would converge to a fixed point, which is contradiction because there is no fixed point in $(0,y_0)$). It is clear that $z_{n-1} < y_0$ and hence $$y_0< H(H(z_{n-1})) < H(H(y_0)) = H(x_0) <x_1.$$ So either $z_n = 1/2$ or $H(H(z_n))=1/2$ or else the sequence $H^{2t}$ converges to $\rho_1$ or $\rho_2$ (by reduction to the previous cases). Completely analogous is the remaining case $z_0 \in (y_1,1)$.

Therefore we showed the following: For all $z \in (0,1)$, either there exists a number $k \in \mathbb{N}$ so that $H^{2k}(z) = \frac{1}{2}$ or the limit $\lim_{t \to \infty}H^{2t}(z)$ exists and is equal to $\rho_1$ or $\rho_2$. Finally, the set $\{z \in(0,1) : \exists k \in \mathbb{N} \textrm{ s.t }H^{2k}(z) = \frac{1}{2}\}$ has measure zero (from Lemma \ref{lem:monotone}, the set $\{z:H(H(z))=1/2\}$ has cardinality at most 5). See also figure \ref{fig:Hexp2} for a visualization of the theorem. In contrast, figure \ref{fig:Hlin2} shows that the linear variant converges to the fixed point $1/2$ ($x = 1/2, y = 1/2$ is a Nash equilibrium of the corresponding game, i.e., the first example of Section \ref{sec:nonconvergence}).
\end{proof}

\subsection{Analyzing $x_{t+1} = G(x_t)$}
\begin{lemma}\label{lem:threefixedpoints} $G$ has 3 fixed points $0< 3/4 < 1$ in $[0,1]$.
\label{lem:chaos}
\end{lemma}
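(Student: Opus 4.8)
The plan is to mirror the analysis already carried out for $H$ in Lemma~\ref{lem:fixedpoints}, but for the asymmetric update map $G(x) = \frac{xe^{-10(x+1)}}{xe^{-10(x+1)}+(1-x)e^{-14(2-x)}}$. First I would observe that $0$ and $1$ are trivially fixed points of $G$, since the numerator vanishes at $x=0$ and the denominator's second term vanishes at $x=1$ (so $G(1)=1$). This leaves the task of locating the fixed points in the open interval $(0,1)$, and I would show there is exactly one, namely $3/4$.

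The key algebraic step is to reduce the fixed-point equation $G(x)=x$ to a tractable transcendental equation. For $x\in(0,1)$, writing $G(x)=x$ and cancelling the factor $x$ from the numerator (legitimate since $x\neq 0$), we need
\begin{equation*}
x e^{-10(x+1)} + (1-x) e^{-14(2-x)} = e^{-10(x+1)},
\end{equation*}
i.e. $(1-x)e^{-14(2-x)} = (1-x)e^{-10(x+1)}$. Cancelling the common factor $(1-x)$ (having already accounted for $x=1$), the equation collapses to $e^{-14(2-x)} = e^{-10(x+1)}$, which by injectivity of the exponential is equivalent to $-14(2-x) = -10(x+1)$, i.e. $-28 + 14x = -10x - 10$, giving $24x = 18$, hence $x = 3/4$. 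So the only fixed point in $(0,1)$ is exactly $3/4$, and together with $0$ and $1$ this gives the three claimed fixed points.

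I should also double-check that $G$ is genuinely a self-map of $[0,1]$ so that the statement ``$G$ has $3$ fixed points in $[0,1]$'' is meaningful: both the numerator and the two summands of the denominator are nonnegative on $[0,1]$, the denominator is strictly positive on $(0,1)$ and equals $e^{-10(x+1)}>0$ at $x=0$ and $e^{-20}>0$... (actually at $x=1$ the denominator is $e^{-20}$), so $G$ is well defined and $G(x)\in[0,1]$ by construction (it is a convex-combination-type ratio). This invariance is already implicit in Section~\ref{ourdynamic} (the simplex $\Delta$ is invariant), so I would just cite that.

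The main obstacle here is essentially nil — unlike the $H$ case where one had to chase down roots of a genuinely transcendental function $g_2$ via Rolle/Bolzano, the asymmetry of the cost functions makes the cancellation exact and the surviving equation linear in $x$. The only place to be careful is the bookkeeping of which cancellations are valid (the factors $x$ and $1-x$) and making sure we have not introduced or lost roots; listing $x=0$ and $x=1$ separately before cancelling handles that. This lemma then feeds into the reduction to a one-dimensional map and the search for a period-three orbit needed for Corollary~\ref{coro:chaos} via the Li--Yorke Theorem~\ref{thm:liyorke}.
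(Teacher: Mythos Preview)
Your proof is correct and follows essentially the same approach as the paper's own proof: reduce $G(x)=x$ for $x\neq 0,1$ to the exponent equation $-10(x+1)=-14(2-x)$ (the paper writes it as $1+x=\tfrac{14}{10}(2-x)$) and solve to get $x=3/4$. You have simply made explicit the cancellations of the factors $x$ and $1-x$ and the invariance of $[0,1]$ that the paper leaves implicit.
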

\begin{proof} Let $x$ be a fixed point of $G$. If $x \neq 0,1$ then $1+x = \frac{14}{10} (2-x)$, therefore $x = \frac{3}{4}$.
\end{proof}
\begin{lemma} \label{lem:threecycle} There exist a $y \in [0,1]$ so that $G(G(G(y)))=y$, $G(y) \neq y$, $G(G(y)) \neq y$ and $G(G(y)) \neq G(y)$. Hence $y,G(y),G(G(y))$ is a periodic orbit of length three.
\end{lemma}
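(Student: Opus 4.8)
The plan is to produce a fixed point of $G^3$ that is not a fixed point of $G$, and then use the primality of $3$ to upgrade it to a genuine period-three orbit. First I would record the convenient form $G(x) = \bigl(1 + \tfrac{1-x}{x}\,e^{24x-18}\bigr)^{-1}$, valid for $x\in(0,1)$, and note that $G$ extends continuously to $[0,1]$ (the denominator $xe^{-10(x+1)}+(1-x)e^{-14(2-x)}$ of the original expression is strictly positive on all of $[0,1]$, with $G(0)=0$, $G(1)=1$), so $G^3$ is continuous on $[0,1]$. By Lemma \ref{lem:threefixedpoints} the only fixed points of $G$ in $[0,1]$ are $0$, $3/4$ and $1$; in particular $G$ has no fixed point in the open interval $(0,3/4)$.

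Next I would exhibit two explicit points $p<q$ in $(0,3/4)$ --- for instance $p$ and $q$ near $0.13$ and $0.34$ --- for which $G^3(p)>p$ and $G^3(q)<q$. To verify these two inequalities rigorously I would evaluate the three successive images, bounding $\phi(x)\defeq\tfrac{1-x}{x}e^{24x-18}$ from above and below at each of the (at most six) relevant arguments; the only care needed is that two of the intermediate images lie extremely close to $1$, so there I would work with the deviation $\delta=1-x$ and use the well-conditioned identity $\phi(1-\delta)=\tfrac{\delta}{1-\delta}e^{6-24\delta}$ rather than evaluating $G$ directly near the boundary. Granting these estimates, $G^3(x)-x$ changes sign on $[p,q]$, so by the intermediate value theorem there is a $y\in(p,q)\subset(0,3/4)$ with $G^3(y)=y$. (Equivalently, step two could be phrased as checking the Li--Yorke configuration $G^3(a)\ge a>G(a)>G^2(a)$ for a single $a$ close to $1$ and invoking the first conclusion of Theorem \ref{thm:liyorke} with $k=3$; this needs the same delicate estimate near $1$.)

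It then remains to see that $y$ has minimal period $3$. Since $(0,3/4)$ contains no fixed point of $G$, we have $G(y)\neq y$. If $G^2(y)=G(y)$, applying $G$ gives $y=G^3(y)=G^2(y)=G(y)$, contradicting $G(y)\neq y$; and if $G^2(y)=y$, then $G(y)=G^3(y)=y$, again a contradiction. Hence $y$, $G(y)$, $G^2(y)$ are three distinct points with $G^3(y)=y$, i.e.\ a periodic orbit of length three, which is exactly the claim.

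The main obstacle is the second step: all the substance lies in certifying $G^3(p)>p$ and $G^3(q)<q$, which means controlling a triple composition of a map that is nearly vertical on part of $[0,1]$ and whose intermediate values brush against the endpoint $1$. The estimates are elementary but must be carried out with explicit rounded bounds on $\phi$ (and its reparametrization $\phi(1-\delta)$), not by appeal to the figure.
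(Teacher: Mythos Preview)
Your proposal is correct and follows essentially the same route as the paper: locate a sign change of $G^{3}(x)-x$ on an interval contained in $(0,3/4)$ (the paper uses the endpoints $0.4$ and $0.5$ rather than your $p\approx 0.13$, $q\approx 0.34$), invoke the intermediate value theorem to get $G^{3}(y)=y$, and then rule out periods $1$ and $2$ via Lemma~\ref{lem:threefixedpoints} together with the same primality-of-$3$ contradiction you give. Your discussion of how to make the numerical step rigorous (tracking $\phi(x)=\tfrac{1-x}{x}e^{24x-18}$ and its $\delta$-reparametrization near $1$) is more careful than the paper, which simply records the approximate values $G^{3}(0.4)-0.4\approx -0.158$ and $G^{3}(0.5)-0.5\approx 0.496$.
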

\begin{proof} It holds that $G(G(G(0.4))) - 0.4 \approx -0.158$ and $G(G(G(0.5)))-0.5 \approx 0.496$ and hence by Bolzano's theorem there exists a $y \in (0.4,0.5)$ so that $G(G(G(y)))=y$. Observe that $y$ cannot be a fixed point of $G$ because of Lemma \ref{lem:threefixedpoints}. If $G(G(y)) = y$, then by applying $G$ we get $G(G(G(y))) = G(y)$ and hence $y = G(y)$ (contradiction since $y$ cannot be a fixed point). Finally, if $G(G(y)) = G(y)$ then by applying $G\circ G$ we get $G(G(G(G(y)))) = G(G(G(y)))$, and since $G(G(G(y)))=y$ we have that $G(y) = y$ (contradiction again). See also figure \ref{fig:Hexp3} for a visualization of the theorem.
\end{proof}
\begin{corollary}
\label{coro:chaos}
There exist two player two strategy symmetric congestion games such that $\textrm{MWU}_{e}$  has periodic orbits of length $n$  for any natural number $n>0$ and as well as an uncountably infinite set of ``scrambled" initial conditions 
(Li-Yorke chaos).
\end{corollary}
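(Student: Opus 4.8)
The plan is to take the second congestion game introduced just above, the symmetric two-player two-strategy game with $c_{e_1}(l) = \tfrac14 l$ and $c_{e_2}(l) = \tfrac{1.4}{4} l$ and learning rate $\epsilon = 1 - e^{-40}$, and to deduce both conclusions from the one-dimensional reduction $x_{t+1} = G(x_t)$ already set up. Recall that when $\epsilon_1 = \epsilon_2$ and the agents start with $x_0 = y_0$, the diagonal $\{x = y\}$ is invariant under $\textrm{MWU}_{e}$ and the induced dynamics on it is exactly the map $G : [0,1] \to [0,1]$. I would first note that $G$ is continuous on the compact interval $[0,1]$, since its denominator $xe^{-10(x+1)} + (1-x)e^{-14(2-x)}$ is strictly positive there, and that $G$ does indeed take values in $[0,1]$ (it is a convex-combination weight).

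Next I would invoke Lemma \ref{lem:threecycle}, which produces a point $y$ with $G^3(y) = y$ and $y, G(y), G^2(y)$ pairwise distinct, i.e.\ a genuine periodic orbit of length three for $G$. This is precisely the hypothesis required by the Li--Yorke theorem (Theorem \ref{thm:liyorke}): the existence of a period-three point forces, for every $k \in \mathbb{Z}^{+}$, a periodic point of period $k$ in $[0,1]$, and an uncountable scrambled set $S \subseteq [0,1]$ on which every pair of distinct points $p, q$ satisfies $\limsup_{n} |G^n(p) - G^n(q)| > 0$ and $\liminf_{n} |G^n(p) - G^n(q)| = 0$, with the analogous separation from every periodic point. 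This already establishes Li--Yorke chaos for the reduced one-dimensional system.

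Finally I would lift these statements to the two-player game. Since the diagonal is $\textrm{MWU}_{e}$-invariant and the restriction of $\textrm{MWU}_{e}$ to it is topologically conjugate to $G$ via $\phi(s) = (s,s)$, a period-$k$ orbit $\{z_1, \dots, z_k\}$ of $G$ gives the period-$k$ orbit $\{(z_1,z_1), \dots, (z_k,z_k)\}$ of the game dynamics, and the image $\{(s,s) : s \in S\}$ of the scrambled set is uncountable and scrambled, because the Euclidean distance between $(p,p)$ and $(q,q)$ equals $\sqrt{2}\,|p-q|$, so the $\limsup$ / $\liminf$ conditions transfer verbatim. Hence the second game exhibits periodic orbits of every length and an uncountable set of scrambled initial conditions, which is the claim.

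I do not expect a real obstacle here: the one genuinely nontrivial input, locating the period-three orbit through a sign change of $G^3(x) - x$ on $(0.4, 0.5)$, is already handled in Lemma \ref{lem:threecycle} (with Lemma \ref{lem:threefixedpoints} ruling out degeneracies), and the remaining points---continuity and self-mapping of $G$ on $[0,1]$, and the harmless transport of ``scrambled'' across the conjugacy onto the invariant diagonal---are routine. The only thing worth stating carefully is that the paper's definition of Li--Yorke chaos is phrased for maps on a compact subset of $\mathbb{R}$, so strictly the clean assertion is that the one-dimensional reduction $G$ is chaotic and that this induces, via the invariant diagonal, periodic orbits of all lengths together with an uncountable scrambled set for the full two-dimensional congestion-game dynamics.
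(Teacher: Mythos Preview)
Your proposal is correct and follows essentially the same approach as the paper: invoke Lemma~\ref{lem:threecycle} to obtain a period-three orbit for $G$, then apply the Li--Yorke theorem (Theorem~\ref{thm:liyorke}) to conclude. You are more explicit than the paper about verifying continuity of $G$ on $[0,1]$ and about lifting the one-dimensional conclusions back to the two-player dynamics via the invariant diagonal, but these are details the paper leaves implicit rather than genuine differences in strategy.
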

\begin{proof} It follows from Li-Yorke theorem (Theorem \ref{thm:liyorke}) and Lemma \ref{lem:threecycle}. See also figure \ref{fig:Hexp4} for a visualization of the theorem. In contrast, figure \ref{fig:Hlin4} shows that the linear variant converges to the fixed point $3/4$ ($x = 3/4, y = 3/4$ is a Nash equilibrium of the corresponding game, i.e., the second example of Section \ref{sec:nonconvergence}).
\end{proof}

\begin{figure}[t]
\begin{minipage}{1.0\textwidth}
\centering     
\subfigure[Exponential $\textrm{MWU}_{e}$: 
Plot of function $G$ (blue) and its iterated versions $G^{2}$ (red), $G^{3}$ (yellow). Function $y(x)=x$ is also included.]{\label{fig:Hexp3}\includegraphics[scale = 0.30]{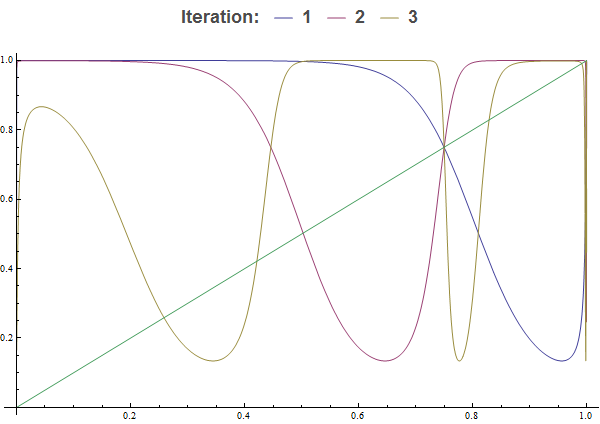}}
\;
\subfigure[Linear $\text{MWU}_\ell$: 
Plot of function $G_\ell$ (blue) and its iterated versions $G_\ell^{2}$ (red) and $G_\ell^{3}$ (yellow). Function $y(x)=x$ is also included.]{  \label{fig:Hlin3}\includegraphics[scale = 0.30]{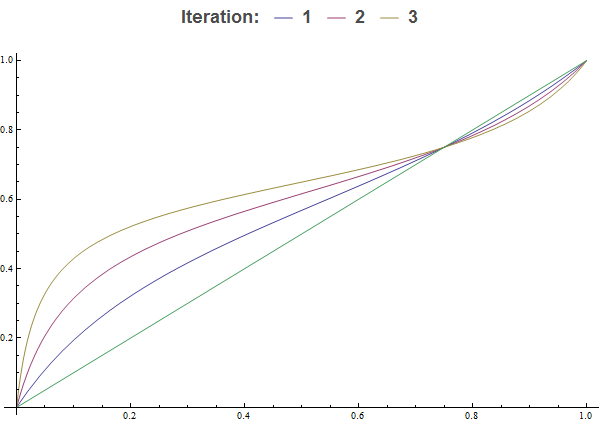}}
\end{minipage}
\end{figure}

\begin{figure}[t]
\begin{minipage}{1.0\textwidth}
\centering     
\subfigure[Exponential $\textrm{MWU}_{e}$:
Plot of function $G^{10}$. Function $y(x)=x$ is also included.]{\label{fig:Hexp4} \includegraphics[scale = 0.30]{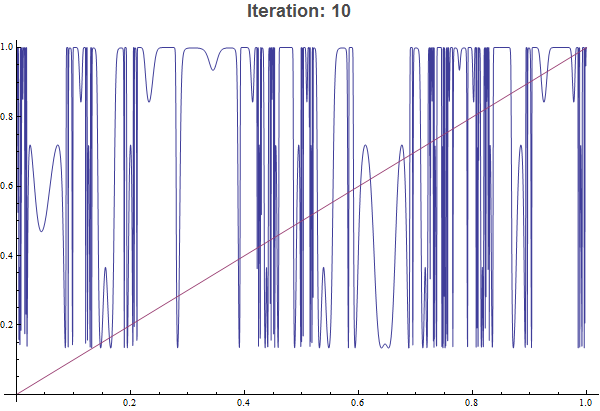}}
\;
\subfigure[Linear $\text{MWU}_\ell$:
 Plot of function $G_\ell^{10}$. Function $y(x)=x$ is also included.]{\label{fig:Hlin4}
  \includegraphics[scale=0.30]{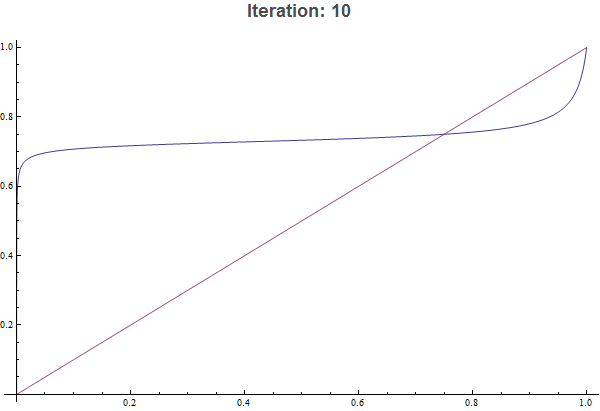}}
\end{minipage}
\caption{We compare and contrast $\textrm{MWU}_{e}$ (left) and  $\textrm{MWU}_{\ell}$ (right) in the same two agent two strategy/edges congestion game with  $c_{e_1} (l) = \frac{1}{4}\cdot l$ and $c_{e_2} (l) = \frac{1.4}{4}\cdot l$ and same learning rate $\epsilon=1-e^{-40}$. $\textrm{MWU}_{e}$ exhibits sensitivity to initial conditions whereas $\textrm{MWU}_{\ell}$ equilibrates. Function $y(x)=x$ is also included in the graphs to help identify fixed points and periodic points.}
\end{figure}

\section{Conclusion and Future Work}

We have analyzed $\text{MWU}_\ell$ in congestion games where agents use \textit{arbitrary admissible
constants} as 
 learning rates $\epsilon$
 and showed convergence to \textit{exact Nash equilibria}. We have also shown that this result is not true
for the nearly homologous exponential variant $\text{MWU}_{e}$
 even for the simplest case of two-agent, two-strategy load balancing games. There we prove that  such dynamics can provably lead to limit cycles or even chaotic behavior.

For a small enough learning rate $\epsilon$ the behavior of $\text{MWU}_{e}$ approaches that of its smooth variant, replicator dynamics, and hence convergence is once again guaranteed \cite{Kleinberg09multiplicativeupdates}. This means that as we increase the learning rate $\epsilon$ from near zero values we start off with a convergent system and we end up with a chaotic one. Numerical experiments establish that between the convergent region and the chaotic region there exists a range of values for $\epsilon$ for which the system exhibits periodic behavior. Period doubling is known as standard route for 1-dimensional chaos (e.g. logistic map) and is characterized by unexpected regularities such as the Feigenbaum constant~\cite{Strogatz2000}. Elucidating these connections is an interesting open problem. More generally, what other type regularities can be established in these non-equilibrium systems?

Another interesting question has to do with developing a better understanding of the set of conditions that result to non-converging trajectories. So far, it has been critical for our non-convergent examples that the system starts from a symmetric initial condition. Whether such irregular $\text{MWU}_{e}$ trajectories can be constructed for generic initial conditions, possibly in larger congestion games, is not known. Nevertheless, the non-convergent results, despite their non-generic nature are rather useful since they imply that we cannot hope to leverage the power of Baum-Eagon techniques for $\text{MWU}_{e}$. 
In conclusion, establishing  generic (non)convergence results (e.g. for most initial conditions, most congestion games) for $\text{MWU}_{e}$  with constant step size is an interesting future direction.


\section*{Acknowledgements}
Georgios Piliouras would like to thank Ioannis Avramopoulos for introducing him to the Li-Yorke literature.
Gerasimos Palaiopanos would like to acknowledge a SUTD Presidential fellowship.
Ioannis Panageas would like to acknowledge a MIT-SUTD postdoctoral fellowship.
Georgios Piliouras would like to acknowledge
	SUTD grant SRG ESD 2015 097 and MOE AcRF Tier 2 Grant  2016-T2-1-170. Part of this work was completed while Ioannis Panageas was a PhD student at Georgia Institute of Technology.
	Part of the work was completed  while Ioannis Panageas and Georgios Piliouras   were visiting scientists at the Simons Institute for the Theory of Computing.
	Part of the work was completed while  Georgios Piliouras was a visiting scientist at the Hausdorff Research Institute for Mathematics (HIM) during the
	Trimester Program on  Combinatorial Optimization.

\bibliographystyle{plain}
\bibliography{ms}

\end{document}